\newtheorem{theorem}{Theorem}
\newtheorem{lemma}[theorem]{Lemma}
\newtheorem{definition}[theorem]{Definition}
\newtheorem{proposition}[theorem]{Proposition}
\tikzstyle{vertex}=[circle,draw=black,fill=white,minimum size=20pt,inner sep=0pt]
\tikzstyle{oedge} = [draw,thick,->]
\tikzstyle{edge} = [draw,thick]
\tikzstyle{selected edge} = [draw,line width=5pt,-,red!50]
\tikzstyle{weight} = [font=\small]
\def\O{\mathcal{O}}
\def\col{$3$-colorability}
\def\NPh{\textsf{NP}-hard}
\def\a{amalgam-width}
\def\I{\mathcal{I}}
\def\phi{\varphi}
\def\N{\mathbf{N}}
\def\TREE{\mathcal{T}}
\def\cl{\mbox{cl}}
\def\msolp{\textsf{MSO-DECIDE}}
\def\ind{\mbox{ind}}
\def\Qv{Q_{E(K(v))}}
\def\Qv{Q_v}
\def\Jv{{J(v)}}
\def\JvI{{J_1(v)}}
\def\JvII{{J_2(v)}}
\def\mc{\mbox{count}}
\def\ACC{\mbox{ACCEPT}}
\def\C{\mathcal{C}}
\def\amop{\oplus^{K, D}}
\def\onetwo{\{1, 2\}}
\newcommand{\executeiffilenewer}[3]{%
    \ifnum\pdfstrcmp{\pdffilemoddate{#1}}%
        {\pdffilemoddate{#2}}>0{\immediate\write18{#3}}
    \fi}
\newcommand{%
    \executeiffilenewer{.svg}{.pdf}%
        {inkscape -z -D --file=.svg --export-pdf=.pdf --export-latex}%
    \input{.pdf_tex}}[1]{%
    \executeiffilenewer{#1.svg}{#1.pdf}%
        {inkscape -z -D --file=#1.svg --export-pdf=#1.pdf --export-latex}%
    \input{#1.pdf_tex}}
\author{Lukas Mach}
\begin{document}

\newcommand{\defined}[1]{#1}

\title{Amalgam width of matroids}
\date{}
\author{
	Luk\' a\v s Mach\footnote{DIMAP and Department of Computer Science, University of Warwick, Coventry, United Kingdom. E-mail: \texttt{lukas.mach@gmail.com}.} \thanks{This author has received funding from the European Research Council under the European Union's Seventh Framework Programme (FP7/2007-2013)/ERC grant agreement no.~259385 and from the student project GAUK no. 592412 when being a PhD student at Computer Science Institute, Faculty of Mathematics and Physics, Charles University, Prague, Czech~Republic.}, 
	Tom\' a\v s Toufar\footnote{Computer Science Institute, Faculty of Mathematics and Physics, Charles University, Prague, Czech~Republic. E-mail: \texttt{tomas.toufar@gmail.com}}
} 
\maketitle 

\begin{abstract} 
We introduce a new matroid width parameter based on the operation of \textit{matroid amalgamation}, which we call \textit{\a}. 
The parameter is linearly related to branch-width on finitely representable matroids, while still allowing the algorithmic application on non-representable matroids (which is not possible for branch-width). 
In particular, any property expressible in the monadic second order logic can be decided in linear time for matroids with bounded amalgam-width.
We also prove that the Tutte polynomial can be computed in polynomial time for matroids with bounded \a.
\end{abstract} 


\section{Introduction}

It is well known that many \NPh\ graph problems can be solved efficiently when restricted to trees or to graphs with bounded tree-width.
Research of this phenomenon culminated in proving a celebrated theorem of Courcelle \cite{C1}, which asserts that any graph property expressible in the monadic second order (MSO) logic can be decided in linear time for graphs of bounded tree-width. 
Such properties include, among many others, \col.
There are several other width parameters for graphs with similar computational properties, e.g., boolean-width \cite{BX} and clique-width \cite{CMR}.

In this work, we study matroids, which are combinatorial structures generalizing the notions of graphs and linear independence. 
Although the tree-width for matroids can be defined \cite{Htw}, a more natural width parameter for matroids is branch-width.
This is due to the fact that the branch-width of graphs can be introduced without referring to vertices, which are not explicitly available when working with (graphic) matroids. 
We postpone the formal definition of branch-width to Section \ref{not} and just note that the branch-width of a matroid or a graph is linearly related to its tree-width. 

It is natural to ask to what extent the above-mentioned algorithmic results for graphs have their counterparts for matroids.
However, most width parameters (including branch-width) do not allow corresponding extension for general matroids without additional restrictions. 
Although computing decompositions of nearly optimal width is usually still possible (see \cite{Ocert, Oapprox}), 
the picture becomes more complicated for deciding properties. 
Extensions to finitely representable matroids are feasible but significant obstacles emerge for non-representable matroids. 
This indicates a need for a width parameter reflecting the complex behavior of matroids that are not finitely representable. 


Let us be more specific with the description of the state of the art for matroids.
On the positive side, the analogue of Courcelle's theorem was proven by Hlin\v en\'y \cite{Hbw} in the following form: 
\begin{theorem}
\cite[Theorem 6.1]{Hbw}
Let $\mathbf{F}$ be a finite field, $\phi$ be a fixed MSO formula and $t \in \N$. 
Then there is a fixed parameter algorithm deciding $\phi$ on $\mathbf{F}$-represented matroids of branch-width bounded from above by $t$.
\end{theorem}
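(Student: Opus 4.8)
The plan is to reduce model checking of $\phi$ on the matroid $M$ to model checking of a derived MSO formula on a finite labeled tree, and then to invoke the classical automata-theoretic decidability of MSO over finite trees (Doner, Thatcher--Wright), which runs in linear time for a fixed formula. Concretely, I would first obtain a branch decomposition of $M$ of width at most $t$; since $M$ is $\mathbf{F}$-represented and has branch-width bounded by $t$, such a decomposition can be produced in fixed-parameter time by the approximation algorithms for computing branch decompositions mentioned above. This yields a subcubic tree $T$ whose leaves are in bijection with the ground set $E(M)$, and where every edge of $T$ induces a separation of the ground set of order less than $t$.

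The heart of the argument is to show that, over the \emph{finite} field $\mathbf{F}$, the structure of $M$ can be encoded into $T$ using only finitely many labels. Fix a matrix $A$ over $\mathbf{F}$ representing $M$, so that each ground-set element becomes a vector. For an edge $e$ of $T$ the associated separation $(X, E(M) \setminus X)$ satisfies that the subspace spanned by $X$ and the subspace spanned by $E(M) \setminus X$ meet in a space of dimension less than $t$ (this dimension is exactly the connectivity minus one); this bounded-dimensional intersection is the only linear-algebraic information that passes across $e$. Because $\mathbf{F}$ is finite and the boundary dimension is below $t$, there are only boundedly many isomorphism types of this boundary data. I would record these types as labels, turning $T$ into a tree over a finite alphabet whose size depends only on $\mathbf{F}$ and $t$.

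Next I would verify that the fundamental matroid predicate --- independence (equivalently, rank) of a subset $S \subseteq E(M)$ --- is computable by a bottom-up pass over $T$ that carries only bounded information. Processing $T$ from the leaves upward, at each node one maintains the projection onto the current boundary subspace of the portion of $S$ seen so far; since the boundary has dimension below $t$ and $\mathbf{F}$ is finite, this state ranges over a finite set, so the computation is exactly a tree automaton. Standard MSO-to-automaton compilation then converts the fixed formula $\phi$ --- whose set quantifiers range over subsets of $E(M)$, represented as colorings of the leaves of $T$ --- into a finite tree automaton $\mathcal{A}_\phi$. Running $\mathcal{A}_\phi$ on the labeled tree $T$ takes time linear in $|E(M)|$ for fixed $\phi$, $\mathbf{F}$, and $t$, which gives the claimed fixed-parameter algorithm.

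The main obstacle, and the step demanding the most care, is the encoding of the second paragraph: one must show not merely that boundary dimensions are bounded, but that every matroid-theoretic predicate referenced by $\phi$ factors through this bounded boundary information in a manner compatible with quantification over the ground set. Making this precise amounts to defining a notion of \emph{boundaried represented matroid} together with a composition operation along the edges of $T$, and then proving a Myhill--Nerode-style congruence: two boundaried pieces inducing the same boundary subspace configuration are interchangeable with respect to every MSO-definable property. Once this congruence is established, finiteness of the label set and the automaton construction follow formally. It is exactly the finiteness of $\mathbf{F}$ that forces finitely many boundary configurations, and its failure for arbitrary matroids is precisely why the result does not extend beyond the representable setting.
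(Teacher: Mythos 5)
This statement is quoted verbatim from Hlin\v en\'y \cite{Hbw} as background; the present paper gives no proof of it, so there is nothing internal to compare against. Your proposal is, in outline, exactly the argument of the cited source: fix an $\mathbf{F}$-representation, compute a branch decomposition in FPT time \cite{Hcon, Oapprox}, exploit the finiteness of $\mathbf{F}$ and the bound $t$ on separation order to encode the decomposition as a tree over a finite alphabet (Hlin\v en\'y's parse trees of boundaried represented matroids), establish a Myhill--Nerode-style congruence showing that MSO-definable properties factor through the bounded boundary data, and finish with the standard MSO-to-tree-automaton compilation; so the proposal is correct and matches the known proof rather than offering an alternative route.
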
 
However, as evidenced by several negative results, a full generalization of the above theorem to all matroids is not possible: 
Seymour \cite{Sey} has shown that there is no sub-exponential algorithm testing whether a matroid (given by an oracle) is representable over $\mbox{GF}(2)$. 
Note that being representable over $\mbox{GF}(2)$ is equivalent to the non-existence of $U_2^4$ minor, which can be expressed in MSO logic. 
This result generalizes for all finite fields and holds even when restricted to matroids of bounded branch-width.
This subsequently implies the intractability of deciding MSO properties on general matroids of bounded branch-width
even when restricted to matroids representable over rationals. 
	See \cite{Crap} for more details on matroid representability from the computational point of view. 
Besides MSO properties, first order properties for matroids have been studied from algorithmic point of view in \cite{GKO}.

Two width parameters were proposed to circumvent the restriction of tractability results to matroids representable over finite fields:
decomposition width \cite{K} 
and another width parameter based on $2$-sums of matroids \cite{S}.
The latter allows the input matroid to be split only along $2$-separations, 
making it of little use for $3$-connected matroids. 
On the other hand, 
though the first one can split the matroid along more complex separations,
it does not correspond to any natural ``gluing'' operation on matroids.
In this work, we present a matroid parameter, 
called \a, that has neither of these two disadvantages and
it still allows proving corresponding algorithmic results.
An input matroid can be split along complex separations 
and the parts of the decomposed matroid can be glued together using the so-called amalgamation \cite{O}, 
which is a well-established matroid operation.



\section{Notation}
\label{not}

We now introduce basic definitions and concepts further used in the paper. 
The reader is also referred to the monograph \cite{O} for a more detailed exposition on matroid theory. 

\textit{A matroid} $M$ is a tuple $(E, \I)$ where $\I \subseteq 2^E$. 
The set $E$ is called \textit{the ground set}, its elements are \textit{the elements of $M$}, and the sets in $\I$ are called \textit{independent sets}. 
The ground set of a particular matroid $M$ is denoted by $E(M)$.
The set $\I$ is required 
	(1) to contain the set $\emptyset$,
	(2) to be hereditary, i.e., for every $F \in \I, \hspace{1.5mm} \I$ must contain all subsets of $F$, 
	and (3) to satisfy \textit{the exchange axiom}: 
	if $F$ and $F'$ are independent sets satisfying $|F| < |F'|$, 
	then there exists $x \in F'$ such that $F \cup \{x\} \in \I$.
We often understand matroids as sets of elements equipped with the property of ``being independent''. 
If a set is not independent, we call it \textit{dependent}.
A minimal depedent set is called \textit{a circuit}.
The set of all circuits of the matroid, denoted by $\mathcal{C}(M)$, uniquely determines the matroid. 

Examples of matroids include \textit{graphic matroids} and \textit{vector matroids}.
The former are derived from graphs in the following way: 
their elements are edges and a set of edges is independent if it does not span a cycle. 
Vector matroids have vectors as their elements and a set of vectors is independent if the vectors in the set are linearly independent.
A matroid $M$ is called representable over a field $F$ if there exists a vector matroid over $F$ isomorphic to $M$. 
Finally, a matroid is binary if it is representable over the binary field and it is regular if it is representable over any field. 

\textit{The rank} of a set $F$, denoted by $r(F)$, is the size of the largest independent subset of $F$ 
(it can be inferred from the exchange axiom that all inclusion-wise maximal independent subsets of $F$ have the same size). 
If $F \subseteq E(M)$, then \textit{the closure operator} $\cl(F)$ is defined as $$\cl(F) := \big\{x : r(F \cup \{ x \}) = r(F) \big\}.$$
It can be shown that $r(\cl(F)) = r(F)$.
A set $F$ such that $\cl(F) = F$ is called \textit{a flat of $M$}.

If $F$ is a subset of $E(M)$, then $M \setminus F$ is the matroid obtained from $M$ by deleting the elements of $F$, i.e., the elements of $M \setminus F$ are those not contained in $F$ and a subset $F'$ of such elements is independent in the matroid $M \setminus F$ if and only if $F'$ is independent in $M$. 
The matroid $M / F$ which is obtained by \textit{contraction of $F$} is defined as follows: the elements of $M / F$ are the ones not contained in $F$ and a $F' \subseteq E(M) \setminus F$ is independent in $(M / F)$ iff $r(F \cup F') = r(F) + r(F')$. 
For $F \subseteq E$, we define \textit{the restriction} $M|F$ as $M \setminus (E \setminus F)$.
\textit{A loop} of $M$ is an element $e$ of $M$ with $r(\{e\}) = 0$. 
\textit{A separation} $(A, B)$ is a partition of the elements of $M$ into two disjoint sets $A$ and $B$ and a separation $(A, B)$ is \textit{a $k$-separation} if $r(A) + r(B) - r(M) \leq k - 1$. 

\textit{A branch-decomposition of a matroid $M = (E, \I)$} is a tree $T$, in which 
\begin{itemize} 
\item the leaves of $T$ are in one-to-one correspondence with the elements of $E$ and 
\item all inner nodes have degree three.
\end{itemize} 
Every edge $e$ of $T$ splits the tree into two subtrees so that the elements corresponding to the leaves of the respective subtrees form a partition $(E_1, E_2)$ of the ground set.
\textit{The width of an edge $e$ of $T$} is defined as $r(E_1) + r(E_2) - r(E) + 1$, where $E_1$ and $E_2$ are the edge sets corresponding to the leaves of the two components of $T \setminus e$.
Thus, the width of an edge $e$ is the smallest $k$ such that the induced partition $(E_1, E_2)$ is a $k$-separation of $M$. 
\textit{The width of the branch-decomposition $T$} is a maximum width of an edge $e \in T$.
Finally, \textit{the branch-width $bw(M)$ of a matroid} is defined as the minimum width of a branch-decomposition of $M$.

The question of constructing a branch decomposition of a small width was positively settled in \cite{Ocert, Oapprox} for general matroids (given by an oracle).
\begin{theorem} 
\cite[Corollary 7.2]{Oapprox} 
For each $k$, there is an $\O(n^4)$ algorithm constructing a decomposition of width at most $3k - 1$ or outputting a true statement that the matroid has branch-width at least $k + 1$.
\end{theorem}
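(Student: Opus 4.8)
Since this is a statement about arbitrary oracle-given matroids, I would first strip away the matroid and work purely with the \emph{connectivity function} $\lambda(A) := r(A) + r(E \setminus A) - r(E)$. This function is symmetric, $\lambda(A) = \lambda(E \setminus A)$, and submodular, $\lambda(A) + \lambda(B) \ge \lambda(A \cap B) + \lambda(A \cup B)$, both facts being immediate consequences of submodularity of the rank $r$. The width of the branch-decomposition edge inducing a partition $(E_1, E_2)$ equals $\lambda(E_1) + 1$, so the whole problem becomes: for an integer symmetric submodular $\lambda$ on $E$, accessed by an oracle, build a branch-decomposition of width $< 3k$ or certify that none of width $\le k$ exists. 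For a matroid each evaluation of $\lambda$ is a constant number of rank queries, which is where the final running-time accounting plugs in.

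\textbf{The certificate and the engine.} For the negative side of the dichotomy I would use \emph{tangles}: a tangle of order $k+1$ is a family $\mathcal{T}$ of sets of small $\lambda$-value obeying the Robertson--Seymour axioms (for each low-order separation exactly one side lies in $\mathcal{T}$; no three members of $\mathcal{T}$ cover $E$; no near-trivial set is ``small''). By the submodular form of tangle--branch-width duality, $\lambda$ has a decomposition of width $\le k$ if and only if it has no tangle of order $k+1$; thus exhibiting such a tangle is a valid true statement that $bw(M) \ge k+1$. To build a decomposition I proceed top-down, maintaining a subcubic tree whose leaves partition $E$ and all of whose displayed separations have $\lambda < 3k - 1$, and repeatedly selecting a leaf-part $A$ with $|A| \ge 2$ to split. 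The engine for splitting is a polynomial subroutine that, for disjoint $X, Y \subseteq A$, returns a separation $(C, E \setminus C)$ with $X \subseteq C$, $Y \cap C = \emptyset$ minimising $\lambda(C)$; this is constrained submodular minimisation, and for matroids it reduces to a matroid-union computation.

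\textbf{Closing the dichotomy.} The heart is a balanced-separation lemma: if $\lambda$ admits a width-$\le k$ decomposition, then every part $A$ encountered admits a bipartition $(A_1, A_2)$ with both sides a constant fraction of $A$ and $\lambda(A_i) < 3k - 1$. One extracts such a split by locating, in a hypothetical width-$\le k$ decomposition of $A$, an edge whose induced set $X$ both meets $A$ in a constant fraction and has $\lambda(X) \le k - 1$; two applications of submodularity then turn this order-$\le k$ separation into the bound $\lambda(A_i) < 3k - 1$, and it is exactly this conversion that loses the factor three. Consequently, if the split-search fails on some part $A$, the separations it inspected cannot all be small on both sides, and I would assemble them into the tangle axioms to certify $bw(M) \ge k+1$. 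Since each successful split shrinks the largest part by a constant factor, the tree refines to singletons in $\O(\log n)$ rounds with width staying below $3k - 1$; charging $\O(n^3)$ amortised oracle cost to each of the $\O(n)$ minimisation calls gives the overall $\O(n^4)$ bound.

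\textbf{Main obstacle.} The genuinely difficult step is the balanced-separation lemma together with the clean extraction of a tangle from a failed split: proving that failure to split within factor three is not an artefact of a poor search but genuinely witnesses a tangle of order $k+1$. Everything else --- submodularity, the minimisation subroutine, and the size bookkeeping --- is routine once that combinatorial core is established.
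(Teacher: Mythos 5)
First, a point of reference: the paper you are comparing against does not prove this statement at all --- it is quoted, with citation, from Oum and Seymour \cite{Oapprox} (their Corollary 7.2), so the only meaningful comparison is with the proof in that source. Measured against it, your proposal gets the outer framing right: pass to the symmetric submodular connectivity function $\lambda(A) = r(A) + r(E \setminus A) - r(E)$, use constrained minimisation of $\lambda$ (reducible, for matroids, to matroid intersection with polynomially many rank-oracle calls) as the algorithmic engine, and accept a factor-$3$ loss coming from submodularity. But both of your core mechanisms differ from the published ones, and both contain genuine gaps rather than merely deferred routine work.

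The first gap is the top-down splitting scheme. Submodularity only yields $\lambda(A \cap C) \le \lambda(A) + \lambda(C)$, so if a current part $A$ already has $\lambda(A)$ close to $3k-1$ and you split it with a separation of order at most $k$, the new parts are bounded only by roughly $4k$; no fixed threshold $t$ survives the recursion, since parts produced from a part at the threshold are bounded only by $t + k$. Your balanced-separation lemma, as stated, therefore cannot close the induction --- and there is no separator set to re-absorb into both sides, which is the trick that saves the analogous top-down argument for tree-width. This is exactly why Oum and Seymour do not recurse top-down: they run a bottom-up greedy closure of a family $\mathcal{P}$ of small-boundary sets (the ``loose tangle'' axioms), and when the closure is forced to contain $E$ they convert the closure history into a branch-decomposition; the factor $3$ is paid once, in that conversion, not in a self-degrading splitting invariant. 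The second gap is your negative certificate. A Robertson--Seymour tangle of order $k+1$ must orient \emph{every} separation of order at most $k$, of which there may be exponentially many; assembling one from the polynomially many separations a failed search happens to inspect is precisely the duality direction you label ``the heart'' and do not prove. The loose tangle is Oum and Seymour's answer to this difficulty: its axioms are designed so that a stuck greedy closure \emph{is already} a loose tangle, and a loose tangle of order $k+1$ implies branch-width at least $k+1$ exactly. (Note also that the theorem only requires the algorithm to output a \emph{true} assertion, not a checkable certificate; certification is the subject of the companion paper \cite{Ocert}.) So the proposal is a plan whose two deferred steps are the two places where the published argument had to introduce its actual new ideas, and one of those steps, as you have set it up, would fail.
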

Moreover, for matroids representable over a fixed finite field, an efficient algorithm for constructing a branch decomposition of optimal width is given in \cite{HOcon}. 

\begin{figure}[h]
\centering
\begin{minipage}{0.49\textwidth}
\centering

  \includegraphics[height=1.3in]{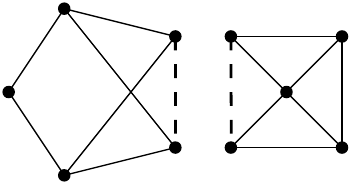}

\end{minipage}
\hspace{0.04\textwidth}
\begin{minipage}{0.38\textwidth}

  \centering
  \includegraphics[height=1.3in]{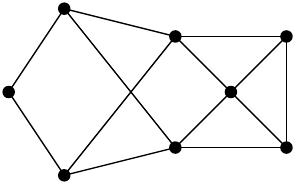}

\end{minipage}
\caption{The underlying graphs of matroids $M_1, M_2$ (with edges $p_1, p_2$ being dashed) and the underlying graph of the graphic matroid $M_1 \odot_{p_1,p_2} M_2$.}
\label{sum}
\end{figure}

Let $M_1$ and $M_2$ be two matroids satisfying $p_i \in E(M_i),$ for $i \in \{1, 2\}$.
Then the $2$-sum $M_1 \odot_{p_1,p_2} M_2$ is defined to be the matroid with the following set of circuits:
\begin{align*} 
\C =&\ \C(M_1 \setminus p_1) \cup\ \C(M_2 \setminus p_2)\ \cup \\
    &\ \{ (C_1 \setminus p_1) \cup (C_2 \setminus p_2) : p_i \in C_i \in \C(M_i) \text{ for } i \in \{1,2\} \}. 
\end{align*} 
An example of a $2$-sum of a pair of graphic matroids is can be found in Figure \ref{sum}.

We say that an algorithm runs in \textit{linear time} if it always finishes in $\O(n)$ steps, where $n$ is the length of the input in an appropriate encoding. 
Similarly, an algorithm runs in \textit{polynomial time} if it always finishes in $\O(n^k)$ steps, for $k \in \N$. 
When a part of the algorithm's input is given by an oracle (e.g., a rank-oracle specifying an input matroid), the time the oracle spent computing the answer is not counted towards the number of steps the main algorithm took -- only the time spent on constructing the input for the oracle and reading its output is accounted for in the overall runtime. 

\textit{A monadic second order formula $\psi$}, shortly an \textit{MSO formula}, for a matroid $M$ contains the basic logic connectives $\lor, \land, \neg, \Rightarrow$, quantifications over elements and subsets of $E(M)$ (which we refer to as element and set variables, respectively), the equality predicate, the predicate of containment of an element in a set, and, finally, the independence predicate determining whether a set of elements of the matroid is independent. 
The independence predicate encodes the input matroid.

Deciding MSO properties of matroids is \NPh\ in general, since, for example, the property that a graph is hamiltonian can be determined by deciding the following formula on the graphic matroid corresponding to the input graph: 
$$\exists H \exists e \big(\mbox{is\_circuit}(H) \land \mbox{is\_base}(H \setminus \{e\})\big),$$ where $H$ is a set variable, $e$ an element variable, and 
$\mbox{is\_circuit($\cdot$)}$ and $\mbox{is\_base($\cdot$)}$ are predicates testing the property of being a circuit and a base, respectively. 
These can be defined in MSO logic as follows:
\begin{align*} 
\mbox{is\_circuit}(H) & \equiv \big(\neg \ind(H)\big) \land \big(\forall e: (e \in H) \Rightarrow \ind(H \setminus \{e\}) \big), \\
\mbox{is\_base}(H) & \equiv \neg \big( \exists e: \ind(H \cup \{e\}) \big).
\end{align*}

\section{Matroid amalgams} 
\label{par}

In this section we define the operation of \textit{a generalized parallel connection}, which plays a key role in the definition of an amalgam decomposition. 
We begin by introducing matroid amalgams and modular flats. 

\begin{definition}
\label{amalgam}
Let $M_1$ and $M_2$ be two matroids.
Let $E = E(M_1) \cup E(M_2)$ and $T = E(M_1) \cap E(M_2)$.
Suppose that $M_1|T = M_2|T$. 
If $M$ is a matroid with the ground set $E$ such that $M|E_1 = M_1$ and $M|E_2 = M_2$, 
we say that $M$ is an \defined{amalgam} of $M_1$ and $M_2$.
\end{definition}

An amalgam of two matroids does not necessarily exist, even if the matroids coincide on the intersection of their ground sets. 
Our aim is to investigate a condition on matroids sufficient for the existence of an amalgam.
To do so, we introduce the notions of \textit{free amalgams} and \textit{proper amalgams.}

\begin{definition}
\label{free}
Let $M_0$ be an amalgam of $M_1$ and $M_2$. 
We say that $M_1$ is the \defined{free amalgam} of $M_1$ and $M_2$ if for every amalgam $M$ of $M_1$ and $M_2$ every set independent in $M$ is also independent in $M_0$.
\end{definition}

The definition of a more restrictive \textit{proper} amalgam is more involved.

\begin{definition}
\label{proper} 
Let $M_1$ and $M_2$ be two matroids with rank functions $r_1$ and $r_2$, respectively, and independent sets coinciding on $E_1 \cap E_2$. 
First, define functions $\eta$ and $\zeta$ on subsets of $E := E_1 \cup E_2$ as follows.
$$\eta(X) := r_1(X \cap E_1) + r_2(X \cap E_2) - r(X \cap T),$$
$$\zeta(X) := \min \{ \eta(Y) : Y \supseteq X \},$$
where $T := E_1 \cap E_2$ and $r$ is the rank function of the matroid $N := M_1 | T = M_2 | T$.
(Note that $\eta$ provides an upper bound on the rank of the set $X$ in a supposed amalgam of $M_1$ and $M_2$, while $\zeta$ is the least of  these upper bounds.) 
If $\zeta$ is submodular on $2^E$, we say that the matroid on $E_1 \cup E_2$ with $\zeta$ as its rank function is \defined{the proper amalgam} of $M_1$ and $M_2$.
\end{definition}

It can be verified that if the proper amalgam of two matroids exists then it is a free amalgam.  
The next lemma provides a necessary and sufficient condition for an amalgam to be the proper amalgam of two given matroids.
\begin{lemma} 
\label{necessary}
Let $M_1$ and $M_2$ be two matroids and $M$ one of their amalgams. 
$M$ is the proper amalgam of $M_1$ and $M_2$ if and only if it holds for every flat of $M$ that 
$$r(F) = r(F \cap E_1) + r(F \cap E_2) - r(F \cap T).$$
\end{lemma}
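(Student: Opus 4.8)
The plan is to translate both sides of the claimed equivalence into statements about the rank function $r_M$ of $M$ and the auxiliary functions $\eta$ and $\zeta$ from Definition~\ref{proper}. Since $M$ is an amalgam, its rank agrees with $r_1$ on subsets of $E_1$, with $r_2$ on subsets of $E_2$, and with the rank $r$ of $N = M|T$ on subsets of $T$; consequently the stated flat identity $r(F) = r(F \cap E_1) + r(F \cap E_2) - r(F \cap T)$ is exactly the equation $r_M(F) = \eta(F)$. Being the proper amalgam, on the other hand, means precisely that $r_M = \zeta$. So the whole lemma reduces to showing that $r_M = \zeta$ holds if and only if $r_M$ and $\eta$ agree on every flat of $M$.

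First I would record the basic inequality $\eta(X) \ge r_M(X)$ for every $X \subseteq E$. This follows from submodularity of $r_M$ applied to the pair $X \cap E_1$ and $X \cap E_2$, whose union is $X$ and whose intersection is $X \cap T$, together with the amalgam identities above. Because $r_M$ is monotone, taking the minimum over supersets immediately upgrades this to $\zeta(X) \ge r_M(X)$ for all $X$. Thus one inequality between $\zeta$ and $r_M$ is free, and the content of both directions lies in controlling the reverse inequality on the relevant sets.

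For the forward direction, assume $M$ is the proper amalgam, so $r_M = \zeta$, and let $F$ be a flat. Pick a set $Y^* \supseteq F$ minimizing $\eta$, so that $\zeta(F) = \eta(Y^*)$. Then $r_M(Y^*) \le \eta(Y^*) = \zeta(F) = r_M(F)$, and since $Y^* \supseteq F$ monotonicity forces $r_M(Y^*) = r_M(F)$. Hence every element of $Y^*$ lies in $\cl(F) = F$, so $Y^* = F$ and $\eta(F) = \zeta(F) = r_M(F)$, which is the flat identity. For the converse, assume the flat identity holds and fix any $X \subseteq E$. Let $F = \cl(X)$; this is a flat with $r_M(F) = r_M(X)$, so by hypothesis $\eta(F) = r_M(F) = r_M(X)$, and since $F \supseteq X$ we get $\zeta(X) \le \eta(F) = r_M(X)$. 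Combined with $\zeta \ge r_M$ from the previous paragraph, this yields $\zeta = r_M$. In particular $\zeta$ is submodular, being a matroid rank function, so the proper amalgam exists and, having rank function $\zeta = r_M$ on the ground set $E$, coincides with $M$.

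The main thing to get right is the closure argument in the forward direction: recognizing that an $\eta$-minimizer above a flat cannot escape that flat's closure is what pins the minimum to $F$ itself, turning $\zeta(F)$ back into $\eta(F)$. Everything else is bookkeeping with submodularity and monotonicity, and the only point requiring care is to keep straight which rank function each occurrence of $r$ refers to, using the amalgam restrictions to identify $r_M$ with $r_1$, $r_2$, and $r$ on the appropriate sets.
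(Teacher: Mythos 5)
The paper itself gives no proof of this lemma --- it is stated as a known fact taken from Oxley's monograph --- so there is no in-paper argument to compare against. Judged on its own, your proof is correct and is the standard argument: the inequality $\zeta(X) \ge r_M(X)$ obtained from submodularity of $r_M$ together with the amalgam identities $r_M = r_1$ on $E_1$, $r_M = r_2$ on $E_2$, $r_M = r$ on $T$; the closure argument in the forward direction showing that any $\eta$-minimizer $Y^* \supseteq F$ has $r_M(Y^*) = r_M(F)$ and hence $Y^* \subseteq \cl(F) = F$, pinning $\zeta(F) = \eta(F)$; and passing to $F = \cl(X)$ in the converse to get $\zeta \le r_M$ everywhere. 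You also correctly handle the one subtlety most sketches omit: concluding from $\zeta = r_M$ that $\zeta$ is submodular (being a matroid rank function), so the proper amalgam exists and coincides with $M$, rather than merely assuming its existence. No gaps.
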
 
However, Lemma \ref{necessary} says nothing about the existence of the proper amalgam of $M_1$ and $M_2$.
We next give a condition that guarantees it. 

\begin{definition}
\label{modular}
A flat $X = \cl(T)$ of a matroid $M$ is \defined{modular} if for any flat $Y$ of $M$ the following holds: 
$$r(X \cup Y) = r(X) + r(Y) - r(X \cap Y).$$
Furthermore, we say that $T$ is \defined{a modular semiflat} if $\cl(T)$ is a modular flat in $M$ and every element of $\cl_M(T)$ is either in $T$, a loop, or parallel to some other element of $T$.
\end{definition}

For example, the set of all elements, the set of all loops, 
	       	and any flat of rank one are modular flats. 
Each single-element set is a modular semiflat. 

\begin{theorem}
\cite{O}
\label{connection} 
Suppose that $M_1$ and $M_2$ are two matroids with a common restriction $N := M_1 | T = M_2 | T$, where $T = E(M_1) \cap E(M_2)$.
If $T$ is a modular semiflat in $M_1$, then the proper amalgam of $M_1$ and $M_2$ exists.
\end{theorem}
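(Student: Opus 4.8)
The plan is to verify that $\zeta$ is the rank function of a matroid $M$ on $E = E_1 \cup E_2$ whose restrictions to $E_1$ and $E_2$ are $M_1$ and $M_2$; by Definition \ref{proper} it suffices to prove that $\zeta$ is submodular, since the remaining rank axioms are routine. Indeed, $\zeta$ is monotone because enlarging $X$ shrinks the family $\{Y : Y \supseteq X\}$ over which the defining minimum is taken; it is non-negative because $\eta(Y) \ge 0$ for every $Y$ (as $r(Y \cap T) = r_1(Y \cap T) \le r_1(Y \cap E_1)$); and $\zeta(X) \le \eta(X) \le |X|$ follows from the increment bound $r_1(X \cap E_1) - r(X \cap T) \le |(X \cap E_1) \setminus T|$ combined with $r_2(X \cap E_2) \le |X \cap E_2|$ and $|X| = |(X \cap E_1) \setminus T| + |X \cap E_2|$. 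So the entire content of the theorem lies in submodularity.

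For submodularity, fix $X, Y \subseteq E$ and pick supersets $X' \supseteq X$ and $Y' \supseteq Y$ attaining the minima, so that $\zeta(X) = \eta(X')$ and $\zeta(Y) = \eta(Y')$. As $X' \cup Y' \supseteq X \cup Y$ and $X' \cap Y' \supseteq X \cap Y$, the definition of $\zeta$ gives $\zeta(X \cup Y) \le \eta(X' \cup Y')$ and $\zeta(X \cap Y) \le \eta(X' \cap Y')$; hence it is enough to establish submodularity of $\eta$ at the single pair $(X', Y')$, namely
$$\eta(X' \cup Y') + \eta(X' \cap Y') \le \eta(X') + \eta(Y').$$
Writing $\eta = f + g - h$ with $f(Z) = r_1(Z \cap E_1)$, $g(Z) = r_2(Z \cap E_2)$ and $h(Z) = r(Z \cap T)$, each of $f, g, h$ is genuinely submodular, so the submodularity defect of $\eta$ at $(X', Y')$ equals $D_f + D_g - D_h$ with $D_f, D_g, D_h \ge 0$ the defects of $f, g, h$. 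The target inequality is therefore $D_f + D_g \ge D_h$.

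Minimizers are used because they are saturated: adding to $X'$ an element of $\cl_{M_1}(X' \cap E_1)$ or of $\cl_{M_2}(X' \cap E_2)$ never increases $\eta$ (a short case analysis on whether the new element lies in $E_1 \setminus T$, in $E_2 \setminus T$, or in $T$ shows $\Delta\eta = 0$, the only potentially negative case being excluded by minimality). I may thus assume that $A := X' \cap E_1$ and $B := Y' \cap E_1$ are flats of $M_1$ (and likewise $X' \cap E_2$, $Y' \cap E_2$ are flats of $M_2$). Now observe that $D_h$ is exactly the submodularity defect of $r_1$ at $(A \cap T, B \cap T)$, since $r = r_1$ on subsets of $T$, while $D_f$ is the defect of $r_1$ at $(A, B)$. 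The modularity of the flat $\cl_{M_1}(T)$ in $M_1$ (Definition \ref{modular}) is precisely what forces the slack of $r_1$ across $A, B$ to dominate its slack across the traces $A \cap T, B \cap T$, giving $D_f \ge D_h$; since the hypothesis constrains only $M_1$, this one-sided comparison is exactly what we should expect, and $D_g \ge 0$ then yields $D_f + D_g \ge D_h$. The semiflat clause --- every element of $\cl_{M_1}(T) \setminus T$ is a loop of or parallel to an element of $T$ --- guarantees that $r$ on $T$ faithfully records $r_1$ on $\cl_{M_1}(T)$, which is what legitimises replacing $T$ by $\cl_{M_1}(T)$ when invoking modularity.

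I expect the main obstacle to be this last transfer, the inequality $D_f \ge D_h$, i.e. that intersecting two flats with the modular flat $\cl_{M_1}(T)$ cannot increase the submodularity defect of $r_1$; reducing it to the defining modular identity of Definition \ref{modular}, applied to the appropriate flats generated by $A$, $B$, and $\cl_{M_1}(T)$, is the delicate point, and it is exactly here that dropping modularity would allow the proper amalgam to fail. Everything else --- the easy axioms, the passage to a minimizing pair, and the decomposition $\eta = f + g - h$ --- is routine bookkeeping. Once $\zeta$ is known to be submodular it is a rank function, and the restrictions come out right because $\eta(Z) = r_1(Z)$ for $Z \subseteq E_1$ (here $Z \cap E_2 = Z \cap T$ and $r_2 = r$ on $T$) forces $\zeta(Z) = r_1(Z)$, and symmetrically for $E_2$; thus $M$ is an amalgam of $M_1$ and $M_2$, and by Lemma \ref{necessary} it is the proper one, completing the proof.
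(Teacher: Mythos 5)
The paper itself offers no proof of this theorem: it is imported wholesale from Oxley's monograph \cite{O} (it is Brylawski's theorem underlying the generalized parallel connection), so your proposal must stand on its own. Its skeleton is sound and is in fact the standard one: by Definition \ref{proper} it suffices to prove $\zeta$ submodular; the minimum defining $\zeta$ may be taken at sets $X', Y'$ whose traces on $E_1$ and $E_2$ are flats of $M_1$ and $M_2$ (your saturation argument, including the exclusion of the bad case by minimality, is correct); and the decomposition $\eta = f + g - h$ reduces everything to the defect inequality $D_f + D_g \ge D_h$. All of this is right --- but none of it uses modularity, and amalgams genuinely fail to exist without it, so none of it touches the actual content of the theorem.

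The gap is that the one step carrying that content, the inequality $D_f \ge D_h$, i.e. that for flats $A, B$ of $M_1$,
\begin{equation*}
r_1(A) + r_1(B) - r_1(A \cup B) - r_1(A \cap B) \;\ge\; r(A \cap T) + r(B \cap T) - r\big((A \cap T) \cup (B \cap T)\big) - r(A \cap B \cap T),
\end{equation*}
is asserted rather than proved: you write that modularity ``is precisely what forces'' it and explicitly defer it as ``the delicate point.'' The inequality is true, but it does not drop out of a single invocation of Definition \ref{modular}; a workable derivation is as follows. Put $C' := \cl_{M_1}\big((A \cap T) \cup (B \cap T)\big) \subseteq \cl_{M_1}(T)$, and note $A \cap C' = A \cap T$, $B \cap C' = B \cap T$. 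Submodularity applied three times gives $r_1(A \cup C') \le r_1(A) + r_1(C') - r_1(A \cap T)$, the analogous bound for $B$, and $r_1(A \cup C') + r_1(B \cup C') \ge r_1(A \cup B) + r_1\big((A \cap B) \cup C'\big)$. Then the modular identity of Definition \ref{modular} applied twice, to the flats $Y = A \cap B$ and $Y = (A \cap B) \vee C'$ (both joins with $\cl_{M_1}(T)$ coincide, since $C' \subseteq \cl_{M_1}(T)$), yields $r_1\big((A \cap B) \cup C'\big) - r_1(A \cap B) \ge r_1(C') - r_1(A \cap B \cap T)$; chaining these four estimates gives exactly the displayed inequality. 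Finally, the passage between the semiflat $T$ and the modular flat $\cl_{M_1}(T)$ is legitimate but also deserves its two lines: for a flat $A$ of $M_1$, every non-loop element of $A \cap \cl_{M_1}(T)$ is parallel to an element of $T$ that must also lie in $A$, so $r_1(A \cap \cl_{M_1}(T)) = r_1(A \cap T)$. In short: correct plan, honestly flagged hole, but as submitted the theorem is left unproven, since every completed step holds for arbitrary matroids.
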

We are now ready to introduce the operation of a generalized parallel connection, which can be used to glue matroids.

If $M_1$ and $M_2$ satisfy the assumptions of Theorem \ref{connection}, then the resulting proper amalgam is called \textit{the generalized parallel connection} of $M_1$ and $M_2$ and denoted by $M_1~\oplus_N~M_2$, where $N := M_1 | (E(M_1) \cap E(M_2)).$
If we use $M_1~\oplus_N~M_2$ without specifying $N$ in advance, 
$N$ refers to the unique intersection of the two matroids.
The generalized parallel connection satisfies the following properties.

%

\begin{lemma} 
If the generalized parallel connection of matroids $M_1$ and $M_2$ exists, $\cl(E_2)$ is a modular semiflat in $M_1 \oplus_N M_2$. 
\end{lemma}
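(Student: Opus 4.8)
The plan is to first dispose of the ``semiflat'' requirement and reduce the statement to a pure modularity claim. Writing $M := M_1 \oplus_N M_2$, $E_1 := E(M_1)$, $E_2 := E(M_2)$, $T := E_1 \cap E_2$ and $X := \cl(E_2)$, I note that $X$ is already a flat, so $\cl(X) = X$ and every element of $\cl(X)$ lies in $X$; the extra clause of Definition \ref{modular} beyond ``$\cl(X)$ is a modular flat'' is thus automatically satisfied, and it remains only to show that $X$ is a \emph{modular} flat of $M$. Two facts come for free: since $M$ is the proper amalgam, its rank function is the map $\zeta$ of Definition \ref{proper} and Lemma \ref{necessary} applies to every flat of $M$; and since $M_1 \oplus_N M_2$ exists, the hypothesis of Theorem \ref{connection} guarantees that $\cl_{M_1}(T)$ is a modular flat of $M_1$. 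I would also record, by applying Lemma \ref{necessary} to the flat $X$ (with $X \cap E_2 = E_2$, $X \cap T = T$ and $r(X) = r(E_2)$), that $r(X \cap E_1) = r(T)$ and hence $X \cap E_1 = \cl_{M_1}(T)$.

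Fixing an arbitrary flat $Y$ of $M$ and putting $Y_1 := Y \cap E_1$, I would then evaluate the two sides of the modularity equation separately. For the ``decomposition'' side, I would apply Lemma \ref{necessary} to the flats $Y$ and $X \cap Y$ (an intersection of flats, hence a flat) and substitute $(X\cap Y)\cap E_1 = \cl_{M_1}(T)\cap Y_1$, $(X\cap Y)\cap E_2 = Y\cap E_2$ and $(X\cap Y)\cap T = T\cap Y$; the terms in $Y\cap E_2$ and $T\cap Y$ then cancel, leaving
$$r(X) + r(Y) - r(X\cap Y) = r(E_2) + r(Y_1) - r\big(\cl_{M_1}(T)\cap Y_1\big).$$
The harder side is the join $r(X\cup Y)$. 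Here I would exploit $r(X\cup Y) = r(E_2\cup Y) = \zeta(E_2\cup Y)$ and minimise $\eta(Z)$ directly over $Z \supseteq E_2\cup Y$: any such $Z$ has $Z\cap E_2 = E_2$ and $Z\cap T = T$, so $\eta(Z) = r(Z\cap E_1) + r(E_2) - r(T)$, where $Z\cap E_1$ ranges over the subsets of $E_1$ containing $T\cup Y_1$. The minimum of $r(Z\cap E_1)$ is therefore $r(T\cup Y_1) = r\big(\cl_{M_1}(T)\cup Y_1\big)$, yielding
$$r(X\cup Y) = r\big(\cl_{M_1}(T)\cup Y_1\big) + r(E_2) - r(T).$$

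I expect this join computation to be the main obstacle: one must argue that enlarging $Z$ can never lower the $E_1$-part of $\eta$, i.e. that gluing across $T$ introduces no rank shortcut between $E_1$ and $E_2$, which is exactly what passing to $\zeta$ (rather than to an arbitrary amalgam) encodes. Granting it, I would finish by equating the two displays: after cancelling $r(E_2)$ and using $r\big(\cl_{M_1}(T)\big) = r(T)$, the target identity $r(X\cup Y) = r(X) + r(Y) - r(X\cap Y)$ reduces to
$$r\big(\cl_{M_1}(T)\cup Y_1\big) = r\big(\cl_{M_1}(T)\big) + r(Y_1) - r\big(\cl_{M_1}(T)\cap Y_1\big).$$
Every set occurring here lies in $E_1$, so all ranks may be read inside $M_1$, and the equation is precisely the statement that the flat $\cl_{M_1}(T)$ is modular in $M_1$, applied to the flat $Y_1 = Y \cap E_1$. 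Since that modularity is supplied by the hypothesis of Theorem \ref{connection}, the identity holds for every flat $Y$, so $X = \cl(E_2)$ is a modular flat and hence a modular semiflat of $M$. The conceptual point is that, once the rank of the join is pinned down, the modularity of $\cl(E_2)$ in $M$ collapses exactly onto the modularity of $\cl_{M_1}(T)$ in $M_1$.
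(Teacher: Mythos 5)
Your proof is correct, but there is nothing in the paper to compare it against: the paper states this lemma without any proof or explicit citation (it is a standard property of the generalized parallel connection, found in the paper's reference [O]), so I checked your argument on its own terms. The reduction of the semiflat condition to pure modularity is valid, since $\cl(E_2)$ is itself a flat of $M_1 \oplus_N M_2$. Your identification $\cl(E_2) \cap E_1 = \cl_{M_1}(T)$ goes through because $\cl(E_2)\cap E_1$ is a flat of $M_1$ containing $T$ (flats of a restriction are traces of flats), and Lemma \ref{necessary} forces its rank to equal $r(T)$. Both rank computations check out, and the step you flag as ``the main obstacle'' is in fact already settled by what you wrote: once $Z \supseteq E_2 \cup Y$ forces $Z\cap E_2 = E_2$ and $Z\cap T = T$, the function $\eta$ depends on $Z$ only through $r_1(Z\cap E_1)$, and monotonicity of rank gives the minimum at $Z\cap E_1 = T\cup Y_1$, so
$$r(\cl(E_2)\cup Y) = r_1\big(\cl_{M_1}(T)\cup Y_1\big) + r(E_2) - r(T)$$
with no further argument needed. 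The concluding identity is precisely the modularity of $\cl_{M_1}(T)$ in $M_1$ — which is what the existence of the generalized parallel connection supplies via the hypothesis of Theorem \ref{connection} — applied to the flat $Y\cap E_1$ of $M_1$. This is a legitimate self-contained derivation that the paper leaves to the literature.
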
 

\begin{lemma} 
\cite[p. 446]{O}
\label{closureissimple}
Let $M_1$ and $M_2$ be two matroids, $T = E(M_1) \cap E(M_2)$, $N$ the matroid $M_1 | T = M_2 | T$, and $M = M_1 \oplus_N M_2$. 
For $X \subseteq E(M_1) \cup E(M_2)$, let $X_i = \cl_i(X \cap E_i) \cup X$. 
It holds that 
$$\cl_M(X) = \cl_1(X_2 \cap E_1 ) \cup cl_2 (X_1 \cap E_2) \text{, and }$$
$$r_M(X) = r_{M_1}(X_2 \cap E_1) + r_{M_2}(X_1 \cap E_2 ) - r\big(T \cap (X_1 \cup X_2)\big).$$
\end{lemma}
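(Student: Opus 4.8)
The plan is to prove the closure formula first and then read off the rank formula from it with the help of Lemma~\ref{necessary}. Throughout, write $A := X_2 \cap E_1$ and $B := X_1 \cap E_2$, so that the two assertions become $\cl_M(X) = \cl_1(A) \cup \cl_2(B)$ and $r_M(X) = r_1(A) + r_2(B) - r\big(T \cap (X_1 \cup X_2)\big)$, where $r_1, r_2, r$ are the rank functions of $M_1, M_2, N$. Set $F := \cl_1(A) \cup \cl_2(B)$; the whole closure statement is the single equality $\cl_M(X) = F$, which I would establish by two inclusions. I will repeatedly use the standard property of restriction that, since $M|E_i = M_i$, one has $\cl_i(Z) = \cl_M(Z) \cap E_i$ for every $Z \subseteq E_i$, together with monotonicity and idempotence of $\cl_M$.

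The inclusion $F \subseteq \cl_M(X)$ is the easy half. Since $X \subseteq \cl_M(X)$ and $\cl_2(X \cap E_2) = \cl_M(X \cap E_2) \cap E_2 \subseteq \cl_M(X)$, the whole set $X_2 = X \cup \cl_2(X \cap E_2)$ lies in $\cl_M(X)$, and symmetrically $X_1 \subseteq \cl_M(X)$. Hence $A = X_2 \cap E_1 \subseteq \cl_M(X)$ and $B = X_1 \cap E_2 \subseteq \cl_M(X)$. Because $A \subseteq E_1$, we get $\cl_1(A) = \cl_M(A) \cap E_1 \subseteq \cl_M(\cl_M(X)) = \cl_M(X)$, and likewise $\cl_2(B) \subseteq \cl_M(X)$; together these give $F \subseteq \cl_M(X)$.

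The reverse inclusion $\cl_M(X) \subseteq F$ is the heart of the argument and is where I expect the main difficulty. Since $X \subseteq A \cup B \subseteq F$, it suffices to show that $F$ is a flat of $M$, for then $\cl_M(X) \subseteq \cl_M(F) = F$. Any flat $G$ of $M$ meets $E_i$ in a flat of $M_i$; the content specific to the generalized parallel connection is the converse gluing statement, and this is exactly the place where the hypothesis of Theorem~\ref{connection} (that $T$ is a modular semiflat of $M_1$) must be used. The crucial auxiliary fact I would isolate is the identity $\cl_1(A) \cap T = \cl_2(B) \cap T$: both sides are flats of $N$, and modularity of $T$ forces them to coincide. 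Granting this, $F \cap E_1 = \cl_1(A) \cup (\cl_2(B) \cap T) = \cl_1(A)$ is a flat of $M_1$ and $F \cap E_2 = \cl_2(B)$ is a flat of $M_2$, and these two flats agree on $T$; the modularity of $T$ then lets one verify via the rank identity of Lemma~\ref{necessary} that no element outside $F$ can be added without increasing the rank, i.e.\ that $F$ is closed in $M$. Proving this gluing statement cleanly from the stated tools (Lemma~\ref{necessary} plus the modular-flat identity) is the step I expect to require the most care.

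Finally, the rank formula follows from the closure formula. With $\cl_M(X) = F$ and $M$ being the proper amalgam, Lemma~\ref{necessary} applies to the flat $F$ and gives $r_M(X) = r_M(F) = r_1(F \cap E_1) + r_2(F \cap E_2) - r(F \cap T)$. Using the simplifications above, $r_1(F \cap E_1) = r_1(\cl_1(A)) = r_1(A)$ and $r_2(F \cap E_2) = r_2(B)$, since rank is invariant under closure. For the last term, note $F \cap T = \cl_1(A) \cap T$, which contains $T \cap (X_1 \cup X_2) = (A \cup B) \cap T$; the remaining quantitative point is that these two subsets of $T$ have equal $N$-rank, i.e.\ $\cl_1(A) \cap T \subseteq \cl_N\big((A \cup B) \cap T\big)$. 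This is again a consequence of the modularity of $T$ in $M_1$, and it is the quantitative counterpart of the set-theoretic obstacle in the previous paragraph. Substituting gives $r_M(X) = r_1(A) + r_2(B) - r\big(T \cap (X_1 \cup X_2)\big)$, as claimed.
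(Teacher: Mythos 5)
Your proposal cannot be compared with a proof in the paper, because the paper gives none: Lemma~\ref{closureissimple} is quoted from Oxley [O, p.~446]. Judged on its own, your easy inclusion $\cl_1(A)\cup\cl_2(B)\subseteq\cl_M(X)$ is correct, but the rest rests on your ``crucial auxiliary fact'' $\cl_1(A)\cap T=\cl_2(B)\cap T$, and that identity is false. Concretely, let $T=\{t_1,t_2,t_3\}$ carry the three-point line $N=U_{2,3}$; let $M_1$ be $N$ together with one extra element $a$ parallel to $t_1$ (then $\cl_1(T)=E_1$ is the whole ground set, so $T$ is a modular semiflat and the generalized parallel connection exists); and let $M_2$ be the four-point line $U_{2,4}$ on $T\cup\{b\}$. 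Take $X=\{a,b\}$. Then $X_1=\{a,t_1,b\}$ and $X_2=\{a,b\}$, so $A=X_2\cap E_1=\{a\}$ and $B=X_1\cap E_2=\{t_1,b\}$, giving $\cl_1(A)\cap T=\{t_1\}$ while $\cl_2(B)\cap T=T$. Both sides are flats of $N$, yet they differ: modularity does not force two flats of $N$ to coincide, and your one-sentence justification of this step was never an argument. Consequently your next claim also fails: $F\cap E_1=\cl_1(A)\cup(\cl_2(B)\cap T)=E_1\neq\{a,t_1\}=\cl_1(A)$. (The lemma itself survives in this example: $F=\cl_M(X)=E(M)$.)

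The same example breaks your derivation of the rank formula. You substitute $r_1(F\cap E_1)=r_1(A)$ and $r(F\cap T)=r\big(T\cap(X_1\cup X_2)\big)$, but here $r_1(F\cap E_1)=2\neq 1=r_1(A)$ and $r(F\cap T)=2\neq 1=r\big(T\cap(X_1\cup X_2)\big)$; the final numbers agree only because the two errors cancel ($2+2-2=1+2-1$), and that cancellation is precisely the content of modularity, not a license for your term-by-term identifications. What a correct proof must confront is exactly the asymmetry your argument assumes away: $F\cap E_1$ is in general strictly larger than $\cl_1(A)$, namely $F\cap E_1=\cl_1(A)\cup(\cl_2(B)\cap T)$ and $F\cap E_2=(\cl_1(A)\cap T)\cup\cl_2(B)$. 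The heart of the lemma is to show (i) that a set is a flat of $M_1\oplus_N M_2$ if and only if it meets each $E_i$ in a flat of $M_i$ (this is where modularity of $T$ and Lemma~\ref{necessary} do real work), and (ii) that the two sets displayed above are indeed closed in $M_1$ and $M_2$ respectively, i.e.\ that the back-and-forth closure process stabilizes after the single round built into $X_1,X_2$. Your proposal never engages with (ii); once (i) and (ii) are established, the minimality argument you sketch (any flat of $M$ containing $X$ contains $X_1,X_2$, hence $A,B$, hence $F$) does finish the closure identity, and the rank formula then requires a separate modularity computation rather than the substitutions you wrote.
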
 

The operation of generalized parallel connection also commutes in the following sense.
\begin{lemma} 
\label{commute}
Let $K, M_1$ and $M_2$ be matroids such that $M_1 | T_1 = K | T_1$ and $M_2 | T_2 = K | T_2$. 
If $T_1$ is a modular semiflat in $M_1$ and $T_2$ is a modular semiflat in $M_2$, then 
$$M_2 \oplus_{N_2}(M_1 \oplus_{N_1} K) = M_1 \oplus_{N_1} (M_2 \oplus_{N_2} K).$$
\end{lemma}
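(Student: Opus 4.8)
The plan is to prove equality of the two matroids by showing they have the same rank function, reading off from the explicit rank formula of Lemma \ref{closureissimple}, applied twice, that the result is symmetric in the two ``wings'' $M_1$ and $M_2$. Throughout write $E_1 = E(M_1)$, $E_2 = E(M_2)$, $E_K = E(K)$, and assume, as is the case in any amalgam decomposition, that $E_1 \cap E_2 = \emptyset$, so that $T_1 = E_1 \cap E_K$, $T_2 = E_2 \cap E_K$, and the common ground set of both sides is $E = E_1 \cup E_2 \cup E_K$. Set $P := M_1 \oplus_{N_1} K$ and $Q := M_2 \oplus_{N_2} K$.

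First I would check that both iterated connections are well defined. By Theorem \ref{connection}, $P$ and $Q$ exist. For the outer connection $M_2 \oplus_{N_2} P$ one has $E_2 \cap E(P) = T_2$ and $P|T_2 = K|T_2 = N_2 = M_2|T_2$ (since $T_2 \subseteq E_K$ and $P|E_K = K$); as $T_2$ is a modular semiflat in $M_2$ by hypothesis, Theorem \ref{connection} applies and $L := M_2 \oplus_{N_2} P$ exists. Symmetrically $R := M_1 \oplus_{N_1} Q$ exists. Both $L$ and $R$ are matroids on $E$, and by iterated restriction both satisfy $L|E_i = R|E_i = M_i$ and $L|E_K = R|E_K = K$; in particular each is an amalgam of $M_2$ and $P$ (resp. of $M_1$ and $Q$) in the sense of Definition \ref{amalgam}.

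The core of the argument is a rank computation. Applying Lemma \ref{closureissimple} to $L = M_2 \oplus_{N_2} P$ expresses $r_L(X)$ through $r_{M_2}$, $r_P$, and the closures $\cl_{M_2}$, $\cl_P$; expanding the inner $r_P$ and $\cl_P$ by a second application of Lemma \ref{closureissimple} to $P = M_1 \oplus_{N_1} K$ rewrites everything in terms of $r_{M_1}, r_{M_2}, r_K, r_{N_1}, r_{N_2}$ and the closures in $M_1, M_2, K$. I expect the two overlap terms $r_{N_2}(T_2 \cap \cdots)$ and $r_{N_1}(T_1 \cap \cdots)$ to separate cleanly, yielding a formula that is manifestly invariant under exchanging the indices $1$ and $2$; the same formula then arises from $R$ by symmetry of the derivation, giving $r_L = r_R$ and hence $L = R$. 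As an alternative bookkeeping device one may instead verify, via Lemma \ref{closureissimple}, that $R$ satisfies the flat identity $r_R(F) = r_{M_2}(F \cap E_2) + r_P(F \cap E(P)) - r_{N_2}(F \cap T_2)$ for every flat $F$ of $R$, and conclude from Lemma \ref{necessary} that $R$ is the proper amalgam of $M_2$ and $P$, which is exactly $L$ by uniqueness of the proper amalgam.

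The main obstacle is the interaction of the nested closure operators. The nontrivial content of Lemma \ref{closureissimple} is that $\cl_P$ is not the union of the closures in $M_1$ and $K$, and likewise $\cl_L$ is not the union of $\cl_{M_2}$ and $\cl_P$; controlling the arguments $X_{M_i} \cap E_{\cdot}$ and the overlap sets $T_i \cap (X \cup \cdots)$ after two substitutions is where the real work lies. This is exactly the point at which modularity enters: the hypotheses that $\cl(T_1)$ and $\cl(T_2)$ are modular flats (in $M_1$ and $M_2$ respectively) are what let the cross terms $r(T_i \cap (\cdots))$ be rewritten so that the double counting along $E_K$ cancels; without modularity these terms do not reconcile and the two orders of amalgamation genuinely differ. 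I would therefore isolate, as the key sublemma, the identity expressing $r_{N_i}(T_i \cap \cl(\cdots))$ through $r_{M_i}$ and $r_K$ using the modular-flat equation of Definition \ref{modular}, and verify that the resulting expression is symmetric before assembling the final formula.
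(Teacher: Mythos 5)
The paper itself offers no proof of Lemma \ref{commute}: it is stated as a known property of the generalized parallel connection (see \cite{O}), so there is no internal argument to compare yours against, and your proposal must stand on its own. Its setup does stand: writing $P := M_1 \oplus_{N_1} K$, $Q := M_2 \oplus_{N_2} K$, $L := M_2 \oplus_{N_2} P$ and $R := M_1 \oplus_{N_1} Q$, your well-definedness check is correct, because in the paper's convention the modular-semiflat hypothesis of Theorem \ref{connection} sits on the \emph{first} argument of $\oplus$, and you rightly note that $P|T_2 = K|T_2 = N_2 = M_2|T_2$ (since $P|E(K) = K$ and $T_2 \subseteq E(K)$), so both $L$ and $R$ exist. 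The problem is that beyond this point neither of your two routes is actually carried out, and the fallback route contains a step that fails as stated.

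In the rank-expansion route, the entire content of the lemma is your sentence that you ``expect the two overlap terms to separate cleanly'' into an expression invariant under swapping the indices $1$ and $2$. Expanding $r_L$ via Lemma \ref{closureissimple} requires substituting for both $r_P$ and $\cl_P$, but the argument fed to $\cl_P$ is $X \cap E(P)$ augmented by data depending on $\cl_{M_2}(X \cap E_2)$, so the two applications of the lemma do not simply compose; one must resolve a genuine fixed-point interaction among the three closure operators before any symmetric formula can emerge. That resolution \emph{is} the proof, and the proposal only names it as an obstacle. In the fallback route, invoking Lemma \ref{necessary} for the pair $(M_2, P)$ with $M = R$ presupposes that $R$ is an \emph{amalgam} of $M_2$ and $P$, i.e.\ that $R|(E_1 \cup E(K)) = P$. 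This does not follow ``by iterated restriction'': the amalgam property of $R = M_1 \oplus_{N_1} Q$ gives only $R|E_1 = M_1$ and $R|E(Q) = Q$, and $E_1 \cup E(K)$ is neither of these sets nor obtainable from them by further restriction. What is actually needed is that the connection commutes with deletion of the elements $E_2 \setminus T_2$ of its second factor, $(M_1 \oplus_{N_1} Q)\setminus(E_2 \setminus T_2) = M_1 \oplus_{N_1} \big(Q \setminus (E_2 \setminus T_2)\big) = M_1 \oplus_{N_1} K$, which is a sibling of the very statement being proved and itself requires a rank computation (it does follow from Lemma \ref{closureissimple} together with the observation that $\cl_Q(Z) \cap E(K) = \cl_K(Z)$ for $Z \subseteq E(K)$, but no such argument appears in the proposal). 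Finally, your closing claim that modularity is what makes the cross terms cancel is speculation: modularity enters through Theorem \ref{connection}, guaranteeing existence of the proper amalgams and hence the validity of Lemma \ref{closureissimple}, and the ``key sublemma'' you say you would isolate is never formulated, let alone proved.
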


\subsection{Amalgam width} 

Recall that the class of graphs of bounded tree-width can be introduced as the set of all subgraphs of a $k$-tree, where a $k$-tree is a graph that can be obtained by glueing two smaller $k$-trees along a clique of size $k$. 
Similarly, matroids of bounded branch-width can be introduced in terms of an operation taking two matroids of bounded branch-width and producing a larger matroid of bounded branch-width by glueing them along a low-rank separation.
The \a\ is also defined using a glueing operation. 
Analogously to the definition of tree-width, where some elements of the clique can be effectively removed after glueing takes place, the operation includes a set of elements to be deleted.
A typical situation when applying the glueing operation is illustrated on Figure \ref{f:glue}.


\begin{definition} 
\label{glueing}
Suppose we are given matroids $M_1, M_2,$ and $K$ such that $E(M_1) \cap E(M_2) \subseteq E(K)$. 
Furthermore, suppose we are also given a set $D \subseteq E(K)$.
Let $J_i := E(M_i) \cap E(K), i \in \{1, 2\}$ and assume the two conditions below hold: 
\begin{itemize} 
\item $M_i|J_i = K|J_i, i \in \onetwo$,
\item $J_1$ and $J_2$ are both modular semiflats in $K$.
\end{itemize} 
Then, the matroid $M_1 \amop M_2$ is defined as follows: 
$$M_1 \amop M_2 := \big((K \oplus_{J_1} M_1) \oplus_{J_2} M_2 \big) \setminus D.$$
We also say that the matroid $M_1 \amop M_2$ is a \textit{result of glueing of $M_1$ and $M_2$ along~$K$ and removing the elements $D$.} 
\end{definition} 

\begin{figure}[h]
 	\centering
    \executeiffilenewer{glueing.svg}{glueing.pdf}%
        {inkscape -z -D --file=glueing.svg --export-pdf=glueing.pdf --export-latex}%
    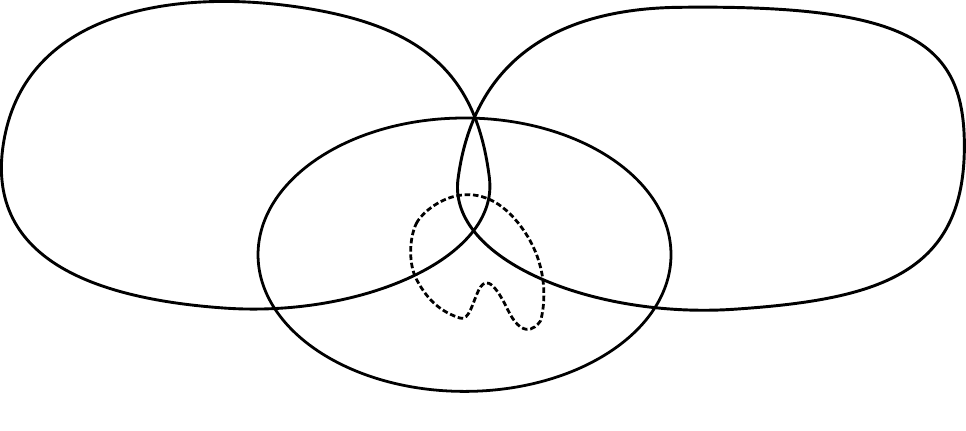
	\caption[]{
		$M_1, M_2$ are the matroids being combined, $K$ is a small matroid used to glue them together, and $D$ is a set of elements that are subsequently removed.}
	\label{f:glue}
\end{figure}
Note that Theorem \ref{connection} guarantees the matroid $M_1 \amop\ M_2$ to be well defined. 
We are now ready to introduce our width parameter. 

\begin{definition} 
\label{width} 
Matroid $M$ has \a\ at most $k \in \N$ if 
\begin{itemize}
\item $|E(M)| \le 1$, or 
\item there are matroids $M_1$ and $M_2$ of amalgam width at most $k$, a matroid $K$ satisfying $|E(K)| \le k$, and a choice of $D \subseteq E(K)$ such that $$M = M_1 \amop M_2.$$
\end{itemize} 
\end{definition} 

Note that the first condition can be weakened to $|E(M)| \le k$ without affecting the definition. 
Every finite matroid $M$ has an amalgam width at most $|E(M)|$. 
The \a\ of $M$ is the smallest $k$ such that $M$ has \a\ at most $k$.

The definition above naturally yields a tree-like representation of the construction of the matroid in question.

\begin{definition}
\label{decomposition}
Assume that $M$ is a matroid with amalgam width $k$.
Any rooted tree $\TREE$ satisfying either of the following statements is called an \defined{amalgam decomposition of $M$ of width at most $k$}: 
\begin{itemize} 
\item $|E(M)| \le 1$ and $\TREE$ is a trivial tree containing precisely one node, 
\item $M = M_1 \amop M_2$ and $\TREE$ has a root $r$ with children $r_1$ and $r_2$ such that the subtrees of $\TREE$ rooted at $r_1$ and $r_2$ are amalgam decompositions of $M_1$ and $M_2$ of width at most $k$.
\end{itemize} 
\end{definition} 

The above definition leads to a natural assignment of matroids to the nodes of $\TREE$: whenever a glueing operation is performed, we assign the resulting matroid to the node.
We use $M^\TREE(v)$ to refer to this matroid and say that the node $v$ represents $M^\TREE(v)$.
For an internal node $v \in \TREE$, we use $M_1^\TREE(v), M_2^\TREE(v), K^\TREE(v), D^\TREE(v), J^{\TREE}_1(v)$ and $J^{\TREE}_2(v)$ to denote the corresponding elements appearing in the glueing operation used to obtain $M^\TREE(v) = M_1^{\TREE}(v) {\oplus}^{K^{\TREE}(v), D^{\TREE}(v)} M_2^{\TREE}(v)$. 
If $v$ is a leaf of a decomposition $\TREE$, we let $M_1^\TREE(v)$ and $M_2^\TREE(v)$ be matroids with empty groundsets, $K^\TREE(v) := M^\TREE(v)$, and $D^\TREE(v) := \emptyset$.
Finally, we denote by $J^{\TREE}(v) \subseteq K(v)$ the set of elements used to glue $M(v)$ to its parent. 
More formally, we set $J^{\TREE}(v) := J^{\TREE}_i(u)$, where $i \in \onetwo$ is chosen depending on whether $v$ is a left or right child of $u$.
Since the decomposition under consideration is typically clear from context, we usually omit the upper index $\TREE$.

Strozecki \cite{S} introduces a similar parameter that uses the operation of a matroid $2$-sum instead of the generalized parallel connection.
However, its applicability is limited since it allows to join matroids only using separations of size at most 2 and thus a corresponding decomposition of a $3$-connected matroid $M$ has a width of $|E(M)|$.
The next proposition implies that the latter is able to express the $2$-sum operation as a special case. 
The parameter of Definition \ref{width} is therefore a generalization of the one from \cite{S}.

\begin{proposition} 
A $2$-sum of matroids $M_1$ and $M_2$ can be replaced by finitely many operations of generalized parallel connections and deletitions. 
\end{proposition}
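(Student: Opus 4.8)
The plan is to show that the $2$-sum $M_1 \odot_{p_1, p_2} M_2$ can be reconstructed using the glueing operation $\amop$ of Definition \ref{glueing}. The key observation is that a $2$-sum glues two matroids along a single distinguished element (the point $p_i$ in each $M_i$), and then effectively removes that point. This matches the template $M = (K \oplus_{J_1} M_1') \oplus_{J_2} M_2' \setminus D$ provided we can arrange the connecting matroid $K$ and the deletion set $D$ appropriately. Since a single-element set is always a modular semiflat (as noted after Definition \ref{modular}), the modularity hypotheses of Definition \ref{glueing} will come for free once we set things up along a common one-element restriction.

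First I would set up the glueing data. Introduce a fresh element $p$ and let $K$ be the matroid on ground set $\{p\}$ consisting of a single non-loop element; equivalently $K = U_{1,1}$. Rename the distinguished elements so that both $M_1$ and $M_2$ share this common element $p$: concretely, relabel $p_1$ and $p_2$ to the single element $p$, so that $E(M_1') \cap E(M_2') = \{p\}$ and $M_1'|\{p\} = K|\{p\} = M_2'|\{p\}$ (all three are the single non-loop $p$). Then $J_1 = J_2 = \{p\}$, each of which is a modular semiflat in $K$ because $\{p\}$ is a single-element set. Set the deletion set $D := \{p\}$. With these choices, Definition \ref{glueing} applies and produces $M_1' \amop M_2' = \big((K \oplus_{\{p\}} M_1') \oplus_{\{p\}} M_2'\big) \setminus \{p\}$.

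The substantive step is to verify that this glued matroid has exactly the circuit set defining the $2$-sum. Here I would use the circuit characterization of the generalized parallel connection. Gluing $M_1'$ and $M_2'$ along the single common point $p$ via $\oplus$ yields a matroid whose circuits are those of $M_1'$, those of $M_2'$, and the ``bridging'' circuits formed from a circuit $C_1 \ni p$ of $M_1'$ and a circuit $C_2 \ni p$ of $M_2'$ by taking $(C_1 \setminus p) \cup (C_2 \setminus p)$. Deleting $p$ then removes $p$ from all sets and kills the circuits that were $\{p\}$-only artifacts, leaving precisely $\C(M_1 \setminus p_1) \cup \C(M_2 \setminus p_2)$ together with the bridging circuits $\{(C_1 \setminus p_1) \cup (C_2 \setminus p_2) : p_i \in C_i \in \C(M_i)\}$. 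This is exactly the circuit set given in the definition of the $2$-sum.

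I expect the main obstacle to lie in confirming that the generalized parallel connection along a single point reproduces precisely this circuit behavior, and in particular that deletion of $p$ does not introduce spurious circuits or destroy required ones. The cleanest route is to invoke Lemma \ref{closureissimple}, which gives explicit formulas for the rank and closure in $M_1' \oplus_{\{p\}} M_2'$; from the rank function one can read off the circuits and confirm that, after deleting $p$, the rank function agrees with that of the $2$-sum. Since both the $2$-sum and the glued object are matroids on the same ground set, matching either their circuit sets or their rank functions suffices to establish equality, so the proposition follows.
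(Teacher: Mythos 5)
The paper itself omits the proof of this proposition, so there is nothing to compare against line by line; judged on its own, your argument is correct and is almost certainly the intended one: it is the classical identity that the $2$-sum is the single-point parallel connection with the basepoint deleted, $M_1 \odot_{p_1,p_2} M_2 = P(M_1,M_2) \setminus p$, realized inside Definition \ref{glueing} by taking $K = U_{1,1}$ on the identified basepoint $p$, $J_1 = J_2 = \{p\}$, and $D = \{p\}$ (with the single-element set $\{p\}$ being a modular semiflat, as the paper notes after Definition \ref{modular}). Three small points would tighten it. First, you should state explicitly that $K \oplus_{J_1} M_1' = M_1'$, which holds because $E(K) \subseteq E(M_1')$ and any amalgam must restrict to $M_1'$ on $E(M_1')$; this is what reduces the template of Definition \ref{glueing} to a single parallel connection $M_1' \oplus_{\{p\}} M_2'$ followed by one deletion. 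Second, your description of deletion is loose: deleting $p$ does not ``remove $p$ from all sets''; rather, the circuits of $P(M_1,M_2) \setminus p$ are exactly the circuits of $P(M_1,M_2)$ that avoid $p$. The conclusion you draw is still right, since the circuits of $P(M_1,M_2)$ avoiding $p$ are precisely $\C(M_1 \setminus p_1) \cup \C(M_2 \setminus p_2)$ together with the bridging circuits $(C_1 \setminus p_1) \cup (C_2 \setminus p_2)$, which is the paper's circuit set for the $2$-sum. Third, your setup implicitly assumes neither $p_1$ nor $p_2$ is a loop: otherwise $M_i'|\{p\} \neq K|\{p\}$ for $K = U_{1,1}$, so the hypothesis $M_i|J_i = K|J_i$ of Definition \ref{glueing} fails (if both basepoints are loops one can instead take $K$ to be a single loop; if exactly one is a loop, no amalgam along $\{p\}$ exists at all, since $M_1'$ and $M_2'$ then disagree on their common restriction). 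This non-degeneracy assumption is standard for $2$-sums and harmless, but it should be stated.
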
 


The proof of the above proposition is omitted. 

Next, we show that the amalgam width is a generalization of the branch-width parameter for finitely representable matroids in the sense that a bound on the value of branch-width implies a bound on the \a. 

\begin{proposition}
\label{btoa}
If $M$ is a matroid with branch-width $k$ and $M$ is representable over a finite field $\mathbf{F}$,
then the amalgam width of $M$ is at most $|\mathbf{F}|^{3k/2}$.
\end{proposition}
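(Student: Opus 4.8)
The plan is to take an optimal branch-decomposition $T$ of $M$ and transform it into an amalgam decomposition whose width is controlled by the representability over $\mathbf{F}$. The key intuition is that for an $\mathbf{F}$-representable matroid, every separation of order $k$ (i.e.\ every $k$-separation in the branch-decomposition sense) has a bounded ``interface'': the boundary between the two sides lives in a subspace of dimension at most $k$, and over a finite field $\mathbf{F}$ such a subspace contains at most $|\mathbf{F}|^{k}$ points (up to the usual scaling). This finiteness is precisely what is \emph{not} available for general matroids, and it is the reason the bound is stated only for the representable case. So the first step is to fix a representation of $M$ over $\mathbf{F}$ as a vector matroid in $\mathbf{F}^{r}$, where $r = r(M)$.

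I would then proceed by induction along the tree $T$. Consider any edge $e$ of $T$, splitting $E(M)$ into $(E_1, E_2)$ with $r(E_1) + r(E_2) - r(M) \le k - 1$. Working inside the representation, let $L_e := \cl(E_1) \cap \cl(E_2)$ be the flat spanned by the common boundary; its rank is bounded in terms of $k$, so the number of points of $M$ lying in (the span of) this boundary, together with the elements needed to realize both sides' view of it, is at most something like $|\mathbf{F}|^{k}$ on each side. The matroid $K$ used at this node of the amalgam decomposition should be a simple matroid realizing this boundary flat: concretely, one takes all $\le |\mathbf{F}|^{k}$ points of the projective space spanned by $L_e$, so that both sides restrict identically to $K$ on their intersection and the gluing conditions of Definition~\ref{glueing} are met. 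Since each side of the separation genuinely uses only a rank-$\le k$ boundary, one chooses $D$ to delete the auxiliary boundary points of $K$ that are not actual elements of $M$, recovering $M$ exactly along $e$.

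The two crucial technical points to verify are, first, that the boundary sets $J_1, J_2$ can be arranged to be modular semiflats in $K$ — this is where the choice of $K$ as a full projective geometry over the boundary subspace pays off, since in a projective geometry every flat is modular, so $\cl_K(J_i)$ is automatically modular, and by taking $K$ simple (adding parallel classes only as needed) one satisfies the ``loop or parallel'' clause of Definition~\ref{modular}. Second, one must check that gluing $M|E_1$ and $M|E_2$ back along this $K$ and deleting $D$ reproduces $M$ and not some other amalgam; here I would invoke Lemma~\ref{necessary}, verifying that for every flat $F$ of $M$ the rank splits as $r(F) = r(F \cap E_1) + r(F \cap E_2) - r(F \cap T)$, which holds because the boundary flat $L_e$ captures exactly the shared rank of the $k$-separation in the representation.

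Finally, the width bound. Each matroid $K$ at a node has $|E(K)| \le |\mathbf{F}|^{k}$ or so from the boundary-counting above; the stated bound $|\mathbf{F}|^{3k/2}$ presumably comes from being slightly more careful with the rank on each side of the separation (the boundary has rank $\le k-1$, but realizing it compatibly on both sides, together with the deletion set, inflates the exponent by the constant factor $3/2$, likely arising from the $3k-1$-type slack in how the common flat interacts with both restrictions). \textbf{The main obstacle} I anticipate is not the counting but the modularity: ensuring that the boundary flats $J_1, J_2$ are genuinely modular semiflats in the chosen $K$ \emph{simultaneously for both sides}, and that the commutativity afforded by Lemma~\ref{commute} lets the local gluings assemble into a globally consistent decomposition along the whole tree. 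Converting the unrooted ternary tree $T$ into the rooted binary amalgam decomposition of Definition~\ref{decomposition}, while keeping every intermediate $K$ of bounded size, is the step where I would expect the real work to lie.
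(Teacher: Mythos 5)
Your high-level strategy does coincide with the paper's: root the branch decomposition, take as the gluing matroid $K$ at a node \emph{all} vectors in the span of a boundary set (so ranks become dimensions and modularity is automatic), and remove the auxiliary vectors through the deletion sets $D$. The genuine gap is in what you declare the boundary to be. You work edge by edge and take the two-way interface $\cl(E_1)\cap\cl(E_2)$ of a single separation $(E_1,E_2)$, giving roughly $|\mathbf{F}|^{k}$ points. But an amalgam decomposition is rooted: the matroid $M(v)$ built at an internal node $v$ must afterwards be glued to the rest of $M$ at the parent of $v$, and the set $J(v)$ used for that later gluing must consist of elements still present in $M(v)$. Hence $K(v)$ must contain not only the interface between the two children's ground sets $E_1$ and $E_2$, but also the interfaces with the remainder $E'=E(M)\setminus(E_1\cup E_2)$; moreover, those auxiliary points may \emph{not} be deleted at $v$ itself (as your ``recovering $M$ exactly along $e$'' suggests) but only in $D(u)$ for an ancestor $u$ where they are no longer needed. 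This is precisely why the paper takes $F$ to be the set of elements lying in at least two of $\cl(E_1)$, $\cl(E_2)$, $\cl(E')$, for which $\dim(F)\le\frac{3}{2}k$, and lets $K(v)$ be the full span of $F$: the exponent $3k/2$ is the dimension of this \emph{three-way} boundary, not the ``slack'' or deletion bookkeeping you conjecture.

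With only the two-way boundary the construction does not assemble globally: the interface between $E_1\cup E_2$ and $E'$ can have dimension up to $k-1$ and in general is not contained in the span of $\cl(E_1)\cap\cl(E_2)$, so after your deletions $M(v)$ need not contain any set along which the parent's generalized parallel connection can be performed; no appeal to Lemma~\ref{necessary} or Lemma~\ref{commute} can repair this, because the required elements are simply absent from $M(v)$. (Those two lemmas are in fact not needed once $K(v)$ is a full subspace: the dimension identity you quote for projective geometries is the paper's entire modularity verification.) The three-way boundary also resolves the issue you flagged but left open --- converting the unrooted cubic tree into a rooted binary decomposition: the paper roots the branch decomposition at an arbitrary internal node, identifies leaves with leaves and internal nodes with internal nodes, and the recursion is consistent exactly because each $K(v)$ carries the interface to everything outside the subtree of $v$.
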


\begin{proof} 
Suppose we are given a (non-trivial) branch decomposition $\mathcal{B}$ of the matroid $M$ of width $k$, 
along with the representation of $M$ over $\mathbf{F}$.
We select an arbitrary internal node of the decomposition tree as its root node. 
Therefore, the elements of $M$ are vectors from $\mathbf{F}^d$ for some dimension $d \in \mathbf{N}$.
We construct an amalgam decomposition $\TREE$ of width at most $|\mathbf{F}|^{3k/2}$.
The leaves of $\TREE$ are the leaves of $\mathcal{B}$ and correspond to the same elements of $M$. 
Similarly, the internal nodes of $\TREE$ are the internal nodes of $\mathcal{B}$ and the associated matroids $K(v)$ (for $v \in \TREE$) are defined as follows.
Consider an internal node $v \in \mathcal{B}$ with children $v_1$ and $v_2$.
We use $E_1$ and $E_2$ to denote the set of elements of $M$ represented by the leaves in the subtree of $\mathcal{B}$ rooted at $v_1$ and $v_2$, respectively. 
We also let $E' := E(M) \setminus (E_1 \cup E_2)$.
Finally, we set $F \subseteq E(M)$ to be the set of all elements in at least two of the sets $\cl(E_1), \cl(E_2)$ and $\cl(E')$. 
Note that $\dim(F) \le \frac{3}{2} k$.
We construct $K(v)$ by taking as its ground set all linear combinations of vectors from $F$. 
Consequently, the sets $J_v^1$ and $J_v^2$ are $E(K(v)) \cap E_1$ and $E(K(v)) \cap E_2,$ respectively.

We need to check that the conditions of Theorem \ref{connection} are met. 
However, every flat $X$ in a matroid containing all $d$-dimensional vectors over $\mathbf{F}$ is modular, since for any flat $Y$ we have: 
$$r(X \cup Y) = \dim(X \cup Y) = \dim(X) + \dim(Y) - \dim(X \cap Y) = r(X) + r(Y) - r(X \cap Y),$$
where $\dim(\cdot)$ is the dimension of a vector subspace of $\mathbf{F}^d$.

The additional elements $E(K(v)) \setminus E(M)$ included in the construction above can be subsequently removed by including them in the set $D(u)$ at an ancestor $u$ of $v$, ensuring that the decomposition represents precisely the input matroid. 

\end{proof} 

\section{MSO properties}  
\label{msol} 

In this section, we show that the problem of deciding monadic second order properties is computationally tractable for matroids of bounded amalgam width. 
Our main theorem reads: 

\begin{theorem}
\label{mainmsollin}
MSO properties can be decided in linear time for matroids with amalgam width bounded by~$k$ (assuming the corresponding amalgam decomposition $\TREE$ of the matroid is given explicitly as a part of the input).
\end{theorem}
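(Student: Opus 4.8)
The plan is to follow the standard Courcelle-style strategy: convert the MSO-decidability problem into a finite bottom-up automaton running on the amalgam decomposition tree $\TREE$. The key conceptual difficulty, compared to the graph case or to Hlin\v en\'y's representable-matroid theorem, is that the matroids here need not be representable, so I cannot rely on a representation matrix to certify ranks. Instead, I would encode the relevant local information about a subtree purely through the behavior of the gluing interface, i.e.\ through the matroid $N = M(v)|\Jv$ together with how independence of sets interacts with that interface.

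First I would argue that the computation can be carried out by processing $\TREE$ from the leaves to the root, where at each node $v$ I maintain a bounded amount of information (a ``type'') about the matroid $M(v)$ represented by $v$, relative to the interface $\Jv$ by which $M(v)$ is glued to its parent. Since $|E(K(v))| \le k$ at every node and $\Jv \subseteq E(K(v))$, the interface has at most $k$ elements; crucially, by Lemma \ref{closureissimple} the rank function of a generalized parallel connection $M_1 \oplus_N M_2$ — and hence of the gluing operation $\amop$ of Definition \ref{glueing} — is determined by the rank functions of the parts together with the rank data on the shared restriction $N$. This means that the \emph{effect} of the subtree rooted at $v$ on any larger matroid built above it is captured by finitely much data about how subsets of $E(M(v))$ restrict to and rank against $\Jv$.

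Next I would make the ``type'' precise by adapting the standard technique of MSO types (the Feferman--Vaught / Ehrenfeucht--Fra\"iss\'e method) to this setting. Fix the formula $\psi$ and its quantifier rank $q$. For a subtree with interface $\Jv$, the type records, for every way of restricting a tuple of set/element variables to the interface $\Jv$, which rank-$q$ MSO types over $M(v)$ are realizable consistently with that boundary restriction. Because the interface has bounded size and there are only finitely many rank-$q$ MSO types (for a fixed signature this is a finite set depending only on $q$ and $k$), the type of $v$ ranges over a finite set whose size is bounded in terms of $k$ and $q$ alone. I would then prove a composition lemma: the type of $M(v) = M_1(v) \amop M_2(v)$ is computable from the types of the two children together with the finite description of $K(v)$ and $D(v)$. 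This composition step is exactly where Lemma \ref{closureissimple} and Lemma \ref{commute} are used, since they let me compute independence/rank of any candidate set in the glued matroid from the corresponding data in the parts; the deletion of $D$ is a routine projection of the type.

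The main obstacle I anticipate is establishing this composition lemma rigorously, and in particular showing that independence of an \emph{arbitrary} subset $S \subseteq E(M(v))$ in the glued matroid is determined by the bounded boundary information rather than by $S$ itself. The difficulty is that a set variable in $\psi$ can range over exponentially many subsets spread across both parts, and I must show that only the interaction of $S$ with the bounded interface $\Jv$ (via closures and ranks on $T$, as in Lemma \ref{closureissimple}) matters for independence in $M(v)$, so that the quantifier over subsets collapses to a quantifier over boundary behaviors plus a choice of child types. Once this localization of independence to the interface is proved, the finiteness of the type space gives a bottom-up automaton with $\O(1)$ work per node (constants depending on $k$ and $\psi$), and the overall linear running time follows from a single pass over $\TREE$, whose size is linear in $|E(M)|$.
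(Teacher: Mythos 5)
Your overall architecture coincides with the paper's: a single bottom-up pass over $\TREE$ in which each node $v$ carries a bounded ``type'' describing $M(v)$ relative to the interface $\Jv$, with a composition step at each gluing node powered by Lemma \ref{closureissimple}, and linear time because the state space depends only on $k$ and $\psi$. The difference in machinery (Feferman--Vaught quantifier-rank types versus the paper's induction on the structure of $\psi$, which builds an explicit tree automaton per subformula, takes products for $\lor$, complements accepting states for $\neg$, and handles $\exists$ by nondeterminism plus the subset construction) is largely presentational.

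The genuine gap is the one you flag and then defer: the composition lemma \emph{is} the mathematical content of this theorem, and your plan does not contain it. As you define them, your types record which rank-$q$ MSO types are realizable \emph{inside} $M(v)$ consistently with a boundary restriction, where independence is interpreted in $M(v)$; but the generalized parallel connection is not an operation to which generic Feferman--Vaught composition applies, so a transfer theorem must be proved, and for the independence predicate there is no direct one --- Lemma \ref{closureissimple} speaks about closures and ranks, not about independence of crossing sets. The paper closes exactly this hole with two devices you do not identify. First, it changes the atomic vocabulary: the predicate $\ind(X)$ is eliminated in favor of the closure predicate $x_1 \in \cl(X_2)$, with no loss of expressiveness since $\ind(X) \equiv \neg\big(\exists e \in X : \cl(X) = \cl(X \setminus \{e\})\big)$; only for the closure predicate does the interface localization go through. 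Second, it makes the type concrete (Definition \ref{typed}): $f_v^X : 2^{\Jv} \to 2^{\Jv}$ with $f_v^X(Y) = \cl_{M(v)}\big((X \cap E(M(v))) \cup Y\big) \cap \Jv$, i.e., the trace on $\Jv$ of the closure of the local part of $X$ augmented by an arbitrary subset $Y$ of the interface. For this type the join $f_{v_1}^X +_{K(v)} f_{v_2}^X$ is obtained as the trace of the smallest set $Z \subseteq E(K(v))$ that is closed under both children's types and contains $Y \cup (X \cap E(K(v)))$, which is computable from the children's types, $K(v)$, and $X \cap E(K(v))$ alone --- all bounded data; Lemma \ref{closureissimple} is what justifies that this fixed-point computation yields the true type of $v$. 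Two smaller corrections: Lemma \ref{commute} is not actually needed in the proof, and set quantifiers are not handled by ``collapsing to boundary behaviors'' but by the standard nondeterminization-then-determinization step (with the automaton additionally checking membership of guessed values in the deleted sets $D(v)$).
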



For the purpose of induction used in the proof of Theorem \ref{mainmsollin}, we need to slightly generalize the considered problem by introducing free variables.
The generated problem is given in Figure \ref{msolp}. 
\begin{figure}[h]
\centering
\begin{framed} 

\textbf{INPUT:} 
\begin{itemize}
\item an MSO formula $\psi$ with $p$ free variables,
\item amalgam decomposition $\TREE$ of a matroid $M$ with width at most $k$, 
\item a function $Q$ defined on the set $\{1, \ldots, p\}$ assigning the $i$-th free variable its value; specifically, $Q(i)$ is equal to an element of $E(M)$ if $x_i$ is an element variable, and it is a subset of $E(M)$ if $x_i$ is a set variable. 
\end{itemize}

\textbf{OUTPUT:} 
\begin{itemize}
\item \textit{ACCEPT} if $\psi$ is satisfied on $M$ with the values prescribed by $Q$ to the free variables of $\psi$.
\item \textit{REJECT} otherwise. 
\end{itemize}
\end{framed} 
\caption[]{The \msolp\ problem.}
\label{msolp}
\end{figure}
To simplify notation, let us assume that if $\psi$ is a formula with free variables, we use $x_i$ for the $i$-th variable if it appears in $\psi$ as an element variable and $X_i$ if it appears as a set variable. 

We prove the following generalization of Theorem \ref{mainmsollin}.

\begin{theorem}
\label{msollin}
The problem \msolp\ can be solved in linear time for matroids with amalgam width bounded by $k$ (assuming the corresponding amalgam decomposition $\TREE$ of the matroid is given as a part of the input).
\end{theorem}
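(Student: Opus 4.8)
The plan is to follow the strategy underlying Courcelle's theorem: a bottom-up dynamic programming over the amalgam decomposition $\TREE$, where at each node we store a finite ``MSO type'' summarizing everything the formula can observe about the subtree-matroid together with the restriction of the assignment $Q$ to that subtree. Fix the quantifier rank $q$ of $\psi$. For a matroid $N$ equipped with an interpretation of the free variables and a distinguished boundary set $B \subseteq E(N)$ with $|B| \le k$, let its $(q,B)$-type be the set of MSO formulas of quantifier rank at most $q$, in the signature extended by one unary predicate per free variable and one constant per element of $B$, that $N$ satisfies. Standard MSO model theory guarantees that, for fixed $q$, $p$ and $k$, there are only finitely many such types, each representable by a bounded-size object, and that the answer to \msolp\ is a computable function of the $(q, \emptyset)$-type of the root matroid.

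The induction runs along $\TREE$. At a leaf $v$ with $|E(M(v))| \le 1$ the matroid is trivial and its type is read off directly from $Q$ restricted to that single element. At an internal node $v$ with $M(v) = M_1(v) \amop M_2(v)$, the boundary to be tracked is $J(v)$, the at-most-$k$ element set gluing $M(v)$ to its parent. The core is a composition lemma asserting that the $(q, J(v))$-type of $M(v)$ is determined by (i) the $(q, J_1(v))$-type of $M_1(v)$, (ii) the $(q, J_2(v))$-type of $M_2(v)$, and (iii) the bounded local data $K(v), D(v), J_1(v), J_2(v)$ together with the way the boundaries and the coloring induced by $Q$ sit inside the at-most-$k$ element matroid $K(v)$. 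Granting this lemma, each node is processed in constant time (the type tables have size bounded by a function of $q$, $p$ and $k$), so the whole computation is linear in the number $\O(n)$ of nodes of $\TREE$.

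To handle the free variables I augment every type with unary predicates recording the coloring $Q$, and I split $Q$ along the tree: since each element of $M$ is assigned to a unique leaf, the value of a set variable decomposes as its intersections with $E(M_1(v))$ and $E(M_2(v))$ at every node, and the only part of this coloring that couples the two sides is its trace on $K(v)$, which is bounded. The deletion of $D(v)$ and the identification of the shared elements in $J_1(v), J_2(v)$ are absorbed into the local bookkeeping; this is where one must maintain a consistent correspondence between elements of the final matroid and leaves of $\TREE$, but it introduces no genuine difficulty.

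The main obstacle is the composition lemma, and specifically the fact that the only non-logical relation available to $\psi$ is the independence predicate $\ind$, which --- unlike a shared vertex set in the graph tree-width setting --- couples $M_1(v)$ and $M_2(v)$ through the matroid operation rather than through a literal overlap. The key observation resolving this is that, by Lemma \ref{closureissimple}, membership of a set $S$ in $\ind$ depends on the two sides only through their rank functions and through the projections of the closures $\cl_i(S \cap E_i)$ onto the bounded common part $T \subseteq E(K(v))$. Consequently $\ind(S)$ on $M(v)$ is expressible as a boolean combination of independence-type information about $S$ internal to each $M_i(v)$ and the finite datum of which of the at-most-$k$ boundary elements lie in $\cl_i(S \cap E_i)$, a datum itself determined by the $(q, J_i(v))$-type of the corresponding part. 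Establishing that this localization is faithful for every quantifier-rank-$q$ formula, via an Ehrenfeucht--Fra\"iss\'e game in which the duplicator plays the two sides independently and synchronizes only on the bounded interface $K(v)$, is the technical heart of the argument; the repeated use of Lemma \ref{commute} keeps the order in which the two connections are formed immaterial.
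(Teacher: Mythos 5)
Your proposal is correct in substance but follows a genuinely different route from the paper. The paper proves the theorem by structural induction on the formula $\psi$, explicitly constructing a deterministic bottom-up tree automaton for each subformula: atomic formulas such as $x_1 \in X_2$ get hand-built automata, disjunction is handled by a product automaton, negation by complementing the accepting states, and existential quantification by passing to a nondeterministic automaton that guesses the witness and then determinizing (with exponential but still finite blow-up). You instead run a single induction over the decomposition tree, computing at each node a quantifier-rank-$q$ MSO type with the boundary $J(v)$ as constants, and prove a composition (Feferman--Vaught-style) lemma via Ehrenfeucht--Fra\"{\i}ss\'e games; quantifiers then need no separate treatment, since they are absorbed into the notion of type. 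Both proofs turn on exactly the same key insight, which you identified: by Lemma \ref{closureissimple}, the two sides of a generalized parallel connection interact only through the trace of closures on the bounded interface, which is what the paper packages as the type $f_v^X \colon 2^{J(v)} \to 2^{J(v)}$ of Definition \ref{typed} and its join operation $+_{K(v)}$. What the paper's route buys is concreteness (explicit automata and transition functions, which are then reused verbatim in the Tutte-polynomial algorithm of Theorem \ref{tuttelin}); what your route buys is uniformity, avoiding both the formula induction and the determinization step. One point in your argument needs more care than you give it: the phrase that $\ind(S)$ depends on the two sides ``through their rank functions'' cannot be taken literally, since a finite type cannot record a rank function. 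The paper sidesteps this by replacing the predicate $\ind(X)$ with the closure predicate $x_1 \in \cl(X_2)$ before the induction starts, so that only closure traces ever need to be composed; in your setting one must instead observe that $S$ can only be independent in $M(v)$ if each $S \cap E(M_i(v))$ is independent in $M_i(v)$, and that under this assumption the rank corrections $r_i\bigl((S \cap E_i) \cup Y\bigr) - |S \cap E_i|$ for $Y \subseteq J_i(v)$ are bounded by $k$ and MSO-expressible with bounded quantifier rank, hence recorded in the component types (of slightly higher rank than the target). With that repair, your composition lemma and the resulting linear-time bound go through.
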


Our aim in the proof of Theorem \ref{msollin} is to construct a linear time algorithm based on deterministic bottom-up tree automatons.  
Let us introduce such automatons.
\begin{definition} 
\defined{A finite tree automaton} is a $5$-tuple $(S, S_A, \delta, \Delta, \Sigma)$, 
where 
\begin{itemize}
\item $S$ is a finite set of states containing a special initial state $0$, 
\item $S_A \subseteq S$ is a non-empty set of accepting states, 
\item $\Sigma$ is a finite alphabet, 
\item $\delta: S \times \Sigma \rightarrow S$ is set of transition rules that determine a new state of the automaton based on its current state and the information, represented by $\Sigma$, contained in the current node of the processed tree, and 
\item $\Delta: S \times S \rightarrow S$ is a function combining the states of two children into a new state. 
\end{itemize}
\end{definition} 
Let us also establish the following simple notation.
\begin{definition}
Consider an instance of an \msolp\ problem. 
In particular, let $Q$ be the variable-assignment function as defined in Figure \ref{msolp}. 
For $F \subseteq E(M)$, we define the \defined{local view of $Q$ at $F$} to be the following function:  
\begin{equation*}
Q_F(i) := 
	\begin{cases} 
		Q(i) \cap F & \text{ if the $i$-th variable is a set variable, } \\ 
		Q(i)        & \text{ if the $i$-th variable is an element variable and $Q(i) \in F$, } \\
		\boxtimes   & \text{ otherwise},
	\end{cases} 
\end{equation*}
where $\boxtimes$ is a special symbol that is not an element of the input matroid. 
\end{definition}
The symbol $\boxtimes$ stands for values outside of $F$. 
We simplify the notation by writing $Q_v(x_i)$ instead of $Q_{E(K(v))}(i)$, where $v$ is a node of an amalgam decomposition~$\TREE$.

The alphabet $\Sigma$ of the automaton we construct will correspond to the set of all possible ``configurations'' at a node $v$ in an amalgam decomposition of width at most $k$. 
A finite tree automaton processes a tree (in our case $\TREE$) from its leaves to the root, assigning states to each node based on the information read in the node and on the states of its children. 
When processing a node whose two children were already processed the automaton calculates the state $s := \Delta(s_1, s_2)$, where $s_1$ and $s_2$ are the states of the children, and moves to the state $\delta(s, q)$, where $q \in \Sigma$ represents the information contained in that node of the tree. 
If the state eventually assigned to the root of the tree is contained in the set $S_A$, we say that the automaton accepts.
It rejects otherwise. 

As a final step of our preparation for the proof of Theorem \ref{msollin}, we slightly alter the definition of an MSO formula by replacing the use of $\ind(X)$ predicate with the use of $x_1 \in \cl(X_2)$, where $\cl(\cdot)$ is the closure function of $M$.
The predicate $\ind(X)$ can be expressed while adhering to the altered definition as follows: 
\begin{align*} 
\ind(X) & \equiv \neg \big( \exists e \in X: \cl(X) = \cl(X \setminus \{e\}) \big). 
\end{align*} 

\begin{proof}[of Theorem \ref{msollin}] 
We proceed by induction on the complexity of the formula $\psi$, starting with simple formulas such as $x_1 = x_2$ or $x_1 \in X_2$. 
In each step of the induction, we design a tree automaton processing the amalgam decomposition tree $\TREE$ and correctly solving the corresponding \msolp\ problem. 
As already mentioned, the alphabet will encode all possible non-isomorphic choices of the matroid $K(v)$, sets $\Jv, \JvI, \JvII,$ and $D(v)$ combined with all possible local views of $Q$ at $v$, allowing this information to be read when processing the corresponding node. 
Note that if $k$ is bounded, the size of the set $\Sigma$ of such configurations is bounded. 
Since the automaton size does not depend on $n$ and the amount of information read in each node of $\TREE$ is bounded by a constant (assuming bounded \a), we will be able to conclude that the running time of our algorithm, which will just simulate the tree automaton, is linear in the size of $\TREE$. 
To start the induction, we first consider the case $\psi = x_1 \in X_2$.
Such instances of \msolp\ can be solved by the automaton given in Figure $\ref{aut1}$. 
This automaton stays in its original state if $x_1$ is assigned $\boxtimes$ by the local view of $Q$ at $E(K(v))$.
Otherwise, it moves to designated ACCEPT and REJECT states based on whether $\Qv(x_1) \in \Qv(X_2)$ holds.
The set $S_A$ is defined to be $\{\ACC\}$.
The function $\Delta: S \times S \rightarrow S$ assigns the ACCEPT state to any tuple containing an ACCEPT state. 
Similarly for the REJECT state.
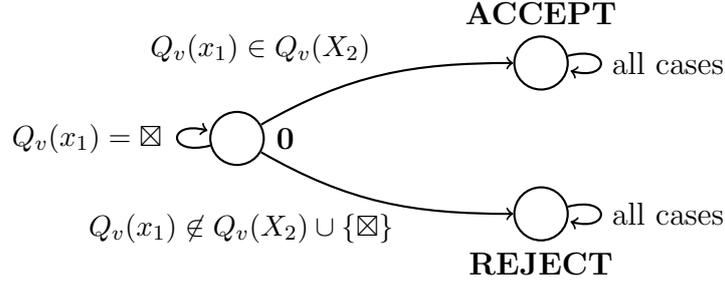
\begin{figure}[h]
\begin{center}\begin{tikzpicture}[style=thick]

	\node[vertex] (st0) at (0, 0)  [label=right:$\mathbf{0}$]    {};
	\node[vertex] (stA) at (4, 1)  [label=above:\textbf{ACCEPT}] {};
	\node[vertex] (stR) at (4, -1) [label=below:\textbf{REJECT}] {};

	\path[oedge] (st0) edge [out=30, in=180]  node[above=0.35cm,left=0cm] {{\small $Q_v(x_1) \in Q_v(X_2)$}} (stA);
	\path[oedge] (st0) edge [out=-30, in=180] node[above=-0.35cm,left=-0.3cm] {{\small $Q_v(x_1) \not\in Q_v(X_2) \cup \{\boxtimes\}$}} (stR);
	\path[oedge] (st0) edge [loop left] node[right=-2.33cm] {{\small $Q_v(x_1) = \boxtimes$}} (st0);
	\path[oedge] (stA) edge [loop right] node[right] {all cases} (stA);
	\path[oedge] (stR) edge [loop right] node[right] {all cases} (stR);

\end{tikzpicture}\end{center}
\caption[]{
	The states and transition rules $\delta$ of the tree automaton for the formula $x_1~\in~X_2$.
	Here, $v$ is the currently processed node of the amalgam decomposition. 
	The names of the states are typed using bold font.
}
\label{aut1}
\end{figure}
We are guaranteed not to encounter the situation where one child node is in the ACCEPT state and the other in the REJECT state, since the free variable assignment function $Q$ maps $x_1$ precisely to one element of $E(M)$.
It is clear that this tree automaton correctly propagates the information of whether $x_1 \in X_2$ or not from the leaf representing the value of $x_1$ to the root of $\TREE$. 


The cases of formulas $x_1 = x_2$ and $X_1 = X_2$ can be handled similarly.
For formulas of the form $\psi_1~\lor~\psi_2$, we construct the automaton by taking the Cartesian product of the automatons $A_1 = (S^1, S_A^1, \delta^1, \Delta^1, \Sigma^1)$ and $A_2 = (S^2, S_A^2, \delta^2, \Delta^2, \Sigma^2)$ for the formulas $\psi_1$ and $\psi_2$, respectively.
Specifically, 
\begin{align*} 
\Sigma & = \Sigma^1 \times \Sigma^2, \\ 
S & = S^1 \times S^2, \\ 
S_A & = (S_A^1 \times S^2) \cup (S^1 \times S_A^2), \\
\Delta\big((x, y)\big) & = \big(\Delta^1(x), \Delta^2(y)\big), \\
\delta\big((x, y), (q, r)\big) & = \big(\delta^1(x, q), \delta^2(y, r)\big).
\end{align*} 
Informally, the two automatons run in parallel and the new automaton accepts precisely if at least one of the two is in an accepting state. 
A formula of the form $\neg \psi$ can be processed by the same automaton as $\psi$, except we change the set accepting states to their complement. 

The connectives $\land, \Rightarrow, \ldots$ can be expressed using $\lor$ and $\neg$ by a standard reduction.

So far, we did not apply most of the properties of amalgam decompositions, including Lemma \ref{closureissimple}, which constrains the possible ways in which closures of sets can behave in a matroid resulting from a generalized parallel connection.
This comes into play when constructing an automaton for the formula $x_1 \in \cl(X_2)$.


Let us first give an informal description. 
When processing a node $v$ of $\TREE$, we are able to see the elements of $K(v)$, to query the independent sets on $E(K(v))$, and to see the local view of $Q(X_2)$ at $E(K(v))$.
Our strategy will be to compute $\cl_{M}(X_2)$ restricted to $E(K(v))$ and determine whether $x_1$ is contained in it.
However, the state at $v$ does not encode necessary information about the remaining part of $M$, i.e., the part represented by the nodes of $\TREE$ that are not descendants of $v$.
The matroid $M(v)$ is joined to this part by a generalized parallel connection using the modular flat $J(v)$. 
Lemma~\ref{closureissimple} says that the remaining part of $M$ can influence the restriction of the closure of $X_2$ on $E(K(v))$ only through forcing some of the elements of this modular flat into the closure. 
Since $|J(v)|$ is bounded, we can precompute the behavior of the resulting closure for all possible cases. 
This information is encoded in the state of the finite automaton passed to the parent node. 
The parent node can then use the information encoded in the states corresponding to its children when precomputing its intersection with $\cl_{M}(X_2)$. 
We formalize this approach using the following definition. 
\begin{definition} 
\label{typed}
Let $v$ be a node of an amalgam decomposition $\TREE$ of $M$ and $X$ be a subset of $E(M)$. 
A map $f_v^X$ from $2^{\Jv} \rightarrow 2^{\Jv}$ satisfying $$f_v^X(Y) = \cl_{M(v)}\big(\big(X \cap E(M(v))\big) \cup Y\big) \cap \Jv$$ is called \defined{the type of a node $v$ with respect to $X$}.
\end{definition} 

When processing a node $v$, we can assume we are given the types $f_1^X$ and $f_2^X$ of the children of $v$ 
and we want to determine the type of $v$. 
The type is then encoded into the state of the finite automaton (along with the information for which choices of $Y \subseteq \Jv$ the formula $\psi$ holds) and is passed to the parent node. 
This information is then reused to determine the type of the parent node, etc. 
This process is captured by the following definition.

\begin{definition} 
Let $v$ be a node of an amalgam decomposition $\TREE$ of a matroid $M$, $v_1$ and $v_2$ the children of $v$, and $X$ a subset of $E(M)$.
If $f_{v_1}^X$ is the type of $v_1$ with respect to $X$ and $f_{v_2}^X$ is a type of $v_2$ with respect to $X$, 
we say that the type $f_{v_1}^X +_{K(v)} f_{v_2}^X$ of $v$ is \defined{the join of $f_{v_1}^X$ and $f_{v_2}^X$} if for every subset $Y$ of $\Jv$ it holds that $f_{v_1}^X +_{K(v)} f_{v_2}^X = Z \cap \Jv$, where $Z$ is the smallest subset of $E(K(v))$ such that
\begin{itemize} 
\item $f_{v_1}^X(Z \cap \JvI) = Z \cap \JvI,$
\item $f_{v_2}^X(Z \cap \JvII) = Z \cap \JvII,$ 
\item $Z \supseteq Y \cup (X \cap E(K(v))).$
\end{itemize} 
\end{definition} 

Lemma~\ref{closureissimple} implies that $f_{v_1}^X +_{K(v)} f_{v_2}^X$ is the type of the node $v$ with respect to $X$.
Observe that the type $f_1^X +_{K(v)} f_2^X$ in the above definition is determined by $f^X_{v_1}, f^X_{v_2}, K(v)$ and $X \cap E(K(v))$ -- each of which has bounded size. 
This implies that the computation of the type $f_1^X +_{K(v)} f_2^X$ can be wired in the transition function of the automaton. 
Deciding if $Q(x_1) \in \cl(X_2) \cap \Jv$ is then reduced to verifying if $Q(x_1) \in f_v^{X_2}(Y)$ for a particular choice of $Y$.

The case of a formula $\exists x : \psi$ is solved by a standard argument of taking the finite tree automaton recognizing $\psi$ and transforming it to a non-deterministic tree automaton that tries to guess the value of $x$ (in our case, the automaton also checks if this guessed value of $x$ lies in the set $D(v)$ of deleted elements).
This non-deterministic tree automaton has a finite number of states by induction. 
A non-deterministic finite tree automaton can be simulated using a deterministic finite tree automaton with up to an exponential blow-up of the number of states,
leading to the conclusion that a formula of this form can again be decided by a deterministic finite tree automaton. 
The case $\exists X : \psi$ is solved analogously. 


Since the algorithm simulating the automaton on $\TREE$ spends $\O(1)$ time in each of the nodes of $\TREE$, there exists a linear time algorithm solving the problem from the statement of the theorem. 
\end{proof}

\section{Tutte polynomial} 

The Tutte polynomial is an important combinatorial invariant defined for graphs and matroids. 
Values of the polynomial encode, e.g., the number of its bases.
In the case of graphs, the values of the polynomial also give numbers of $k$-colorings. 
\begin{definition} 
Let $M$ be a matroid with a ground set $E$. 
The Tutte polynomial of $M$ is a bivariate polynomial $$T_M(x, y) := \sum_{F \subseteq E} (x - 1)^{r(E) - r(F)} (y - 1)^{|F|-r(F)}.$$
\end{definition} 
The main result of this section is the following theorem: 
\begin{theorem} 
\label{tuttelin}
For every $k \in N$, there exists a polynomial-time algorithm that given an amalgam decomposition with width at most $k$ of a matroid $M$ computes the coefficients of the Tutte polynomial of $M$ (assuming the corresponding amalgam decomposition $\TREE$ of the matroid is given explicitly as a part of the input).
The degree of the polynomial in the running time estimate of the algorithm is independent of $k$.
\end{theorem}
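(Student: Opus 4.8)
The plan is to reduce the computation of $T_M$ to a counting problem and then solve the latter by a bottom-up dynamic program along the amalgam decomposition $\TREE$. First I would pass to the Whitney rank generating function $R_M(u,w) := \sum_{F \subseteq E} u^{r(E) - r(F)} w^{|F| - r(F)}$, which satisfies $T_M(x,y) = R_M(x-1,\, y-1)$; it therefore suffices to determine, for every pair of integers $(s,t)$, the number $N(s,t)$ of subsets $F \subseteq E(M)$ with $|F| = s$ and $r(F) = t$, since $r(E)$ is a fixed number readable from the decomposition and there are only $\O(n^2)$ such pairs. Recovering the coefficients of $T_M$ from the table $N(s,t)$ is then a routine finite computation.

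The heart of the argument is the dynamic program on $\TREE$. For a node $v$, the subtree matroid $M(v)$ is attached to the rest of $M$ solely through the modular semiflat $\Jv$, whose size is at most $k$. By Lemma \ref{closureissimple}, the rank of a global set $F$ depends on its local part $F \cap E(M(v))$ only through three pieces of data: the number of surviving (non-deleted) local elements it contains, its local rank $r_{M(v)}(F \cap E(M(v)))$, and the way its closure meets the interface, that is, the type $f_v^{F}$ of Definition \ref{typed}. I would therefore maintain at each node a table assigning to every admissible type map $f \colon 2^{\Jv} \to 2^{\Jv}$ a bivariate generating polynomial $g_{v,f}(u,w) = \sum_{s,t} c^{v,f}_{s,t}\, u^s w^t$, where $c^{v,f}_{s,t}$ counts the local subsets realizing type $f$ that contribute size $s$ and local rank $t$. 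The number of type maps depends only on $k$, and each polynomial has $\O(n^2)$ coefficients.

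For an internal node with $M(v) = M_1(v) \amop M_2(v)$, I would compute its table from those of its two children by enumerating all pairs of child types, forming their join (in the sense of the definition following Definition \ref{typed}) to obtain the type at $v$, and combining the child generating polynomials. The size exponents simply add, after inserting the surviving elements of $K(v)$ and discarding those of $D(v)$; the rank exponents are combined through the identity of Lemma \ref{closureissimple}, whose correction term $-\,r\big(T \cap (X_1 \cup X_2)\big)$ records exactly the interface elements counted on both sides and so prevents double counting. Because each child contributes a bounded number of types and combining two polynomials with $\O(n^2)$ terms costs polynomial time, each node is processed in polynomial time whose degree in $n$ is independent of $k$. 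At the root, which has no parent and hence empty interface, a single polynomial $g_{\mathrm{root}}(u,w)$ survives, and its coefficient of $u^s w^t$ is precisely $N(s,t)$.

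The step I expect to be the main obstacle is verifying that the triple (local size, local rank, type) is a sufficient statistic for the recursion: one must check, using the modularity of $\Jv$ together with Lemma \ref{closureissimple}, that the rank $r_{M(v)}\big((F \cap E(M(v))) \cup Y\big)$ of a local set augmented by any boundary set $Y \subseteq \Jv$ is reconstructible from the stored data, so that the join correctly tracks the global rank as interface forcing propagates up the tree. Ensuring that the $-\,r\big(T \cap (X_1 \cup X_2)\big)$ correction is applied consistently and exactly once per interface, across all nodes, is the delicate bookkeeping at the core of the proof.
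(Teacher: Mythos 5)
Your proposal follows essentially the same route as the paper's proof: the paper maintains, for each node $v$, a table $\mc_v(r,s,f)$ counting subsets of $E(M(v))$ by rank, size, and type (Definition \ref{typed}), and fills it bottom-up by enumerating pairs of child types and forming their join $f_1 +_{K(v)} f_2$ --- exactly your dynamic program, with the coefficients of your generating polynomials $g_{v,f}$ being these counts, and with the root's single (empty-interface) type yielding the rank--size counts that determine the Tutte polynomial. The interface bookkeeping you flag as the delicate point is handled in the paper by first passing to a \emph{nice} decomposition (Lemma \ref{nice}), which makes $\JvI$ and $\JvII$ disjoint at every node.
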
 

Before we start the proof of Theorem \ref{tuttelin}, we introduce a slight modification of the notion of amalgam decompositions.
\begin{definition} 
An amalgam decomposition $T$ is nice if for each node $v$ the sets $\JvI$ and $\JvII$ are disjoint.
\end{definition} 
Every amalgam decomposition can be transformed into a nice amalgam decomposition such that the width increases only by a constant factor by duplicating the elements of $J(u) \cap \Jv$ and subsequently deleting the duplicates by including them in the set $D_{w}$ for some ancestor $w$ of $u, v$.
\begin{lemma} 
\label{nice} 
Let $T$ be an amalgam decomposition of matroid $M$ with width $k$. 
Then there exists a nice amalgam decomposition of $M$ with width at most $2k$. 
Moreover, such decomposition can be found in linear time.
\end{lemma}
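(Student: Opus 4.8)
The plan is to repair the failures of niceness locally and independently at every node, using the duplicate-and-delete mechanism indicated before the statement. Fix a node $v$ with children $v_1, v_2$ and consider a shared element $e \in \JvI \cap \JvII$, that is, an element of $E(K(v))$ belonging to both $E(M_1(v))$ and $E(M_2(v))$ and hence occurring in both subtrees. I would leave $e$ untouched in the subtree rooted at $v_1$ and replace every occurrence of $e$ in the subtree rooted at $v_2$ by a fresh symbol $e'$; in the gluing matroid I would enlarge $K(v)$ by the single element $e'$ made parallel to $e$, so that now $e \in \JvI$ while $e' \in \JvII$, and I would add $e'$ to $D(v)$. Carrying this out for each element of $\JvI \cap \JvII$ makes $\JvI$ and $\JvII$ disjoint at $v$; performing it at every node then produces a decomposition $T'$ whose every node satisfies the niceness condition.

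Two points need to be verified for $T'$ to be a legal amalgam decomposition of the same matroid $M$. First, the enlarged gluing data still obeys the hypotheses of the glueing operation: because $e'$ is added parallel to an element $e$ already lying in $\JvI$, the flat $\cl_{K(v)}(\JvI)$ merely gains the parallel element $e'$, so it remains modular and $\JvI$ remains a modular semiflat in the enlarged $K(v)$; the same holds for $\JvII$ with the roles of $e$ and $e'$ exchanged. Thus Theorem \ref{connection} continues to guarantee that $M_1(v) \amop M_2(v)$ is well defined. Second, the represented matroid is unchanged: the substitution inside the subtree of $v_2$ is a bijection, hence an isomorphism of the matroids appearing there, and gluing along the parallel pair $e, e'$ followed by the deletion of $e'$ recreates exactly the original identification of the two copies of $e$. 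This last equality is read off from the closure and rank formulas of Lemma \ref{closureissimple} together with the elementary fact that deleting one of two parallel elements preserves all ranks as long as its partner survives; applying it to every duplicated element shows that $T'$ still represents $M$ and that $E(M)$ is unaffected, since the surviving copy of each shared element is the one kept in the left subtree.

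For the width, note that a subtree substitution is a renaming and therefore size-preserving, so no gluing matroid grows on account of substitutions performed elsewhere. Growth occurs only at the node $v$ where the duplicates are introduced, and there at most one element is added for each element of $\JvI \cap \JvII \subseteq E(K(v))$; since $|\JvI \cap \JvII| \le |E(K(v))| \le k$, the size of $E(K(v))$ at most doubles, so the width of $T'$ is at most $2k$. The construction itself is local and can be realized by a single traversal of $\TREE$ that, at each node, inspects the constantly many elements of $K(v)$, installs the parallel duplicates, and updates $D(v)$, with the subtree renamings propagated by maintaining the accumulated substitution as an invertible map on element names; this yields the claimed linear running time. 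The step I expect to be the real obstacle is the matroid-theoretic verification of the second paragraph, namely confirming that enlarging $K(v)$ by a parallel element keeps $\JvI$ and $\JvII$ modular semiflats (so that Theorem \ref{connection} still applies) and that the subsequent deletion leaves $M$ intact, which amounts to a careful but routine application of Lemma \ref{closureissimple}.
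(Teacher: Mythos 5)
Your proposal follows essentially the same route as the paper, whose entire justification for this lemma is the single preceding sentence: duplicate the elements of $J_1(v) \cap J_2(v)$ as parallel copies and remove the duplicates by placing them in a deletion set $D$ at an ancestor, which is exactly your construction (with the deletion placed at the parent node itself). Your additional verifications -- that adding a parallel element preserves the modular-semiflat hypotheses of Theorem \ref{connection}, that the width at most doubles since at most one duplicate is added per element of $E(K(v))$, and that lazy propagation of the renamings keeps the transformation linear-time -- are consistent with, and more detailed than, what the paper provides.
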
 


We are now ready to prove Theorem \ref{tuttelin}.
\begin{proof}[of Theorem \ref{tuttelin}] 
The key idea is to count the number of sets with given size, rank, and type (see Definition \ref{typed}). 
This way, we get the coefficients in the definition of the polynomial.
For a given set $X \subseteq E(M(v))$, the type of $v$ allows us to compute the closure of $X$ on $E(K(v))$ without any additional knowledge of the structure of $M(v)$. 
In the proof of Theorem \ref{msollin}, we have seen that we can compute the type of a node $v$ with respect to a set $X$ from the types of its children in constant time. 
The number of subsets of $E(M(v))$ with given rank $r$, size $s$ and type $f$ is denoted by $count_v(r,s,f)$. 
Algorithm $1$ computes these numbers for a node $v$ using the numbers computed for its children.
\begin{algorithm}
\label{tuttealg} 
INPUT: vertex $v \in \TREE$ with children $v_1, v_2$ \\
OUTPUT: $\mc_v(r, s, f)$ for all $\text{ranks } r, \text{ set sizes } s \text{ and set types } f$ \\
\vspace{2mm}
initialize $\mc_v(r, s, f) \leftarrow 0$ for all $r, s, f$. \\
\For{$s_1 \in {1, 2, \ldots, |E(M(v_1))|}$} { 
\For{$s_2 \in {1, 2, \ldots, |E(M(v_2))|}$} {
\For{$r_1 \in {1, 2, \ldots, r(M(v_1))}$} {
\For{$r_2 \in {1, 2, \ldots, r(M(v_2))}$} { 
\For{$f_1$ type at $v_1$} {
\For{$f_2$ type at $v_2$} {
$f \leftarrow f_1 +_{K(v)} f_2$ \\ 
$s \leftarrow s_1 + s_2$ \\
$\mc_v(r, s, f) \leftarrow \mc_v(r, s, f) + \mc_{v_1}(r_1, s_1, f_1) \times \mc_{v_2}(r_2, s_2, f_2)$
}
}
}
}
}
}
\caption{Computing the $\mc_v$ function for a branching node $v \in \TREE$.} 
\end{algorithm} 

The algorithm for Theorem \ref{tuttelin} first computes the value of $\mc_v$ for leaves of $\TREE$, then picks an arbitrary node such that the value of $\mc_v$ for both of its children were already determined and applies Algorithm \ref{tuttealg}. 
The computation for leaves is trivial.
There are only two possible cases: either the element represented by the leaf is a loop or is not. 
At the root $r$, there is only one type $f_0$ (since $J_r$ is empty) and the number $\mc_v(r, s, f_0)$ is the number of sets of $M$ of a given rank and size.

Let us turn our attention to the analysis of the time complexity of this algorithm.
Each of the two outer loops of Algorithm \ref{tuttealg} makes at most $n$ iterations. 
The loops iterating over the rank make at most $r(M)$ iteration each. 
The number of types can be bounded by a function of $k$ and is therefore constant with respect to $n$. 
Thus, the number of iterations of the remaining two loops is again also constant. 
We conclude that the total time complexity of Algorithm \ref{tuttealg} is $\O\big(n^2 r^2\big)$, where $r := r(M)$.
For the computation of the Tutte polynomial, we need to call Algorithm \ref{tuttealg} for each branching node. 
So, the resulting time complexity of our algorithm is $\O\big(n^3 r^2\big)$.
\end{proof}
\label{con}


\section{Conclusion} 

Both the Theorem \ref{msollin} of Section \ref{msol} and the Theorem \ref{tuttelin} assume that the amalgam decomposition is given as a part of the input. 
This assumption can be removed for matroids that are representable over a fixed finite field, since the proof of Proposition \ref{btoa} gives a linear time algorithm constructing an amalgam decomposition from a branch decomposition. 
Therefore, we can use a polynomial-time algorithm \cite{O1, Hcon} for constructing a branch decomposition and then convert it to an amalgam decomposition of width bounded by a constant multiple of the original branch-width.
Similarly, it can be shown that the branch decomposition of a representable matroid can be obtained from an amalgam decomposition in a natural way. 
However, we have not been able to settle the complexity of constructing amalgam decomposition of (approximately) optimal width for a general oracle-given matroid.

\section{Acknowledgment} 
The authors would like to thank Dan Kr\' al' for his valuable insights into the problem.

%
%






\bibliographystyle{siam}
\bibliography{all}

\end{document}